\newtheorem{theorem}{Theorem}
\newtheorem{lemma}{Lemma}
\newtheorem{proposition}{Proposition}
\newtheorem{corollary}{Corollary}
\newtheorem{definition}{Definition}
\newtheorem{example}{Example}
\newtheorem{remark}{Remark}
\newcommand{\prob}{\ensuremath{\mathbb{P}}}
\newcommand{\Reals}{\ensuremath{\mathbb{R}}}
\newcommand{\set}{\ensuremath{\mathcal}}
\newcommand{\PU}{\ensuremath{P_1}}
\newcommand{\PZ}{\ensuremath{P_0}}
\DeclareMathOperator*{\esssup}{ess\,sup}
\DeclareMathOperator*{\essinf}{ess\,inf}
\newcommand{\MYfooter}{\smash{
\hfil\parbox[t][\height][t]{\textwidth}{}\hfil\hbox{}}}
\def\ps@IEEEtitlepagestyle{%
\def\@oddhead{\mbox{}2016 ICSEE International Conference on the Science of Electrical Engineering \rightmark \hfil } \def\@oddfoot{\MYfooter{978-1-5090-2152-9/16/\$31.00\:\copyright2016\:IEEE}}%
\def\@evenfoot{\MYfooter}}
\def\ps@headings{%
\def\@oddhead{\mbox{}2016 ICSEE International Conference on the Science of Electrical Engineering \rightmark \hfil }} \makeatother \pagestyle{headings} 
\begin{document}
\title {\LARGE{$f$-Divergence Inequalities via Functional Domination}}
\author{
\IEEEauthorblockN{Igal Sason}
\IEEEauthorblockA{Andrew and Erna Viterbi Faculty of Electrical Engineering \\
Technion-Israel Institute of Technology \\
Haifa 32000, Israel\\
E-mail: sason@ee.technion.ac.il}
\and
\IEEEauthorblockN{Sergio Verd\'{u}}
\IEEEauthorblockA{Department of Electrical Engineering \\
Princeton University\\
New Jersey 08544, USA\\
E-mail: verdu@princeton.edu}}

\maketitle

\begin{abstract}
This paper considers derivation of $f$-divergence inequalities via the approach
of functional domination. Bounds on an $f$-divergence based on one or several
other $f$-divergences are introduced, dealing with pairs of probability measures
defined on arbitrary alphabets. In addition, a variety of bounds are shown to hold
under boundedness assumptions on the relative information.\footnote{This work has
been supported by the Israeli Science Foundation (ISF) under
Grant 12/12, by NSF Grant CCF-1016625, by the Center for Science of Information,
an NSF Science and Technology Center under Grant CCF-0939370, and by ARO under
MURI Grant W911NF-15-1-0479.}
\end{abstract}

{\bf{Index Terms}} --
$f$-divergence,
relative entropy,
relative information,
reverse Pinsker inequalities,
reverse Samson's inequality,
total variation distance,
$\chi^2$ divergence.

\section{Basic Definitions}
\label{sec:preliminaries}

We assume throughout that the probability measures $P$ and $Q$ are
defined on a common measurable space $(\set{A}, \mathscr{F})$, and $P \ll Q$
denotes that $P$ is {\em absolutely continuous} with respect to $Q$.

\begin{definition} \label{def:RI}
If $P \ll Q$, the {\em relative information} provided by $a \in \set{A}$
according to $(P,Q)$ is given by\footnote{$\frac{\text{d}P}{\text{d}Q}$ denotes the
Radon-Nikodym derivative (or density) of $P$ with respect to $Q$. Logarithms
have an arbitrary common base, and the exponent indicates the inverse function
of the logarithm with that base.}
\begin{align}  \label{eq:RI}
\imath_{P\|Q}(a) \triangleq \log \frac{\text{d}P}{\text{d}Q} \, (a).
\end{align}
\end{definition}

Introduced by Ali-Silvey \cite{AliS} and Csisz\'ar (\cite{Csiszar63}),
a useful generalization of the relative entropy, which retains some of its
major properties (and, in particular, the data processing inequality),
is the class of $f$-divergences. A general
definition of $f$-divergence is given in \cite[p.~4398]{LieseV_IT2006},
specialized next to the case where $P \ll Q$.

\begin{definition} \label{def:fD}
Let $f \colon (0, \infty) \to \Reals$ be a convex function,
and suppose that $P \ll Q$. The {\em $f$-divergence} from $P$ to $Q$ is given
by
\begin{align} \label{eq:fD}
D_f(P\|Q) = \int f \left(\frac{\text{d}P}{\text{d}Q}\right) \, \text{d}Q
= \mathbb{E} \bigl[f(Z) \bigr]
\end{align}
with
\begin{align} \label{eq: Z}
Z = \exp\bigl(\imath_{P\|Q}(Y)\bigr), \quad Y \sim Q.
\end{align}
In \eqref{eq:fD}, we take the continuous extension\footnote{The
convexity of $f \colon (0, \infty) \to \Reals$ implies its continuity on $(0, \infty)$.}
\begin{align} \label{eq: f at 0}
f(0) = \lim_{t \downarrow 0} f(t) \in (-\infty, +\infty].
\end{align}

If $p$ and $q$ denote, respectively, the densities of $P$ and $Q$ with respect to a
$\sigma$-finite measure $\mu$ (i.e., $p = \frac{\text{d}P}{\text{d}\mu}$,
$q=\frac{\text{d}Q}{\text{d}\mu}$), then we can write \eqref{eq:fD} as
\begin{align} \label{eq:fD2}
D_f(P\|Q) = \int q \; f\left(\frac{p}{q}\right) \, \text{d}\mu.
\end{align}
\end{definition}

\begin{remark} \label{remark: equivalence-fD}
Different functions may lead to the same $f$-divergence for all $(P,Q)$: if for an
arbitrary $b \in \Reals$, we have
\begin{align}  \label{eq: fD3}
f_b(t) = f_0(t) + b \, (t-1), \quad t \geq 0
\end{align}
then
\begin{align}  \label{eq: fD4}
D_{f_0}(P\|Q) = D_{f_b}(P\|Q).
\end{align}
\end{remark}

\par
Relative entropy is $D_r(P\|Q)$ where
$r$ is given by
\begin{align} \label{eq: r}
r(t) = t \log t + (1-t) \log e,
\end{align}
and the total variation distance $|P-Q|$ and $\chi^2$ divergence
$\chi^2(P\|Q)$ are $f$-divergences with $f(t) = (t-1)^2$ and $f(t)=|t-1|$,
respectively.

\par
The following key property of $f$-divergences follows from Jensen's inequality.

\begin{proposition} \label{prop: fGibbs}
If $f \colon (0, \infty) \to \Reals$ is convex and $f(1)=0$,
$P \ll Q$, then
\begin{align}  \label{eq: fGibbs}
D_f(P\|Q) \geq 0.
\end{align}
If, furthermore, $f$ is strictly convex at $t=1$, then equality in \eqref{eq: fGibbs}
holds if and only if $P=Q$.
\end{proposition}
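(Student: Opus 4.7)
The plan is to apply Jensen's inequality directly to the representation $D_f(P\|Q) = \mathbb{E}[f(Z)]$ from \eqref{eq:fD}, exploiting the fact that $Z = \frac{\text{d}P}{\text{d}Q}(Y)$ with $Y \sim Q$ has expectation one. Specifically, since $P \ll Q$,
\begin{align}
\mathbb{E}[Z] = \int \frac{\text{d}P}{\text{d}Q} \, \text{d}Q = \int \text{d}P = 1,
\end{align}
because $P$ is a probability measure. Convexity of $f$ on $(0,\infty)$, together with the continuous extension \eqref{eq: f at 0} at $0$ (so that $f$ is convex on $[0,\infty)$ and $Z \geq 0$ is in its domain), yields via Jensen
\begin{align}
D_f(P\|Q) = \mathbb{E}\bigl[f(Z)\bigr] \geq f\bigl(\mathbb{E}[Z]\bigr) = f(1) = 0,
\end{align}
which is \eqref{eq: fGibbs}.

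For the equality condition under strict convexity of $f$ at $t=1$, I would argue as follows. Suppose $D_f(P\|Q) = 0$. If $Z$ were not almost surely equal to $1$ under $Q$, then since $\mathbb{E}[Z] = 1$ the law of $Z$ would place positive mass on $\{Z \neq 1\}$, and strict convexity at $t=1$ would force strict inequality in Jensen's inequality, contradicting $\mathbb{E}[f(Z)] = f(1)$. Hence $Z = 1$ $Q$-almost surely, i.e., $\frac{\text{d}P}{\text{d}Q} = 1$ $Q$-a.s., which in view of $P \ll Q$ gives $P = Q$. The converse ($P=Q \Rightarrow D_f(P\|Q) = 0$) is immediate from \eqref{eq:fD}.

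The only mildly delicate step is the equality case: one must ensure that the standard Jensen equality statement applies under the sole hypothesis of strict convexity \emph{at} $t=1$, rather than everywhere. This is handled by the supporting-line characterization: for a convex $f$ there exists a subgradient $c$ with $f(t) \geq f(1) + c(t-1)$ for all $t$, with strict inequality for $t \neq 1$ precisely when $f$ is strictly convex at $1$; integrating this inequality against the law of $Z$ gives both the bound and its equality condition without any further regularity assumption on $f$.
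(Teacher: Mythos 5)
Your approach is the same as the paper's: Proposition~\ref{prop: fGibbs} is stated there with no more justification than that it ``follows from Jensen's inequality,'' and your supporting-line version of Jensen is the standard way to make that rigorous. The inequality part of your argument is complete: the affine minorant $f(t) \geq f(1) + c(t-1)$ together with $\mathbb{E}[Z]=1$ both guarantees that $\mathbb{E}[f(Z)]$ is well defined in $(-\infty,+\infty]$ and yields \eqref{eq: fGibbs} upon taking expectations.

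The one place where your justification overreaches is the closing claim that the supporting line at $1$ satisfies $f(t) > f(1) + c(t-1)$ for all $t \neq 1$ ``precisely when $f$ is strictly convex at $1$.'' Under the usual chord notion of strict convexity at a point (there is no open interval containing $1$ on which $f$ is affine), this is false: the paper's own Marton function $s(t) = (t-1)^2 \, 1\{ t < 1 \}$ in \eqref{eq: s} is strictly convex at $1$ in that sense, yet it coincides with its supporting line $\ell \equiv 0$ on all of $[1,\infty)$. What strict convexity at $1$ actually gives is that the contact set $\{t : f(t) = f(1) + c(t-1)\}$, which is a closed interval containing $1$, cannot contain a neighborhood of $1$, so it lies entirely in $[1,\infty)$ or entirely in $(0,1]$. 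Consequently, from $\mathbb{E}[f(Z)] = 0$ and $\mathbb{E}\bigl[f(Z) - c(Z-1)\bigr] = 0$ you may only conclude, say, $Z \geq 1$ almost surely, and you must invoke $\mathbb{E}[Z]=1$ a second time to upgrade this to $Z=1$ almost surely and hence $P=Q$. This is a one-line repair, and your main paragraph already gestures at it by appealing to $\mathbb{E}[Z]=1$, but as written the supporting-line paragraph does not deliver the equality case for functions such as $s$, for which the paper explicitly asserts that $D_s(P\|Q)=0$ implies $P=Q$.
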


The reader is referred to \cite{Vajda_2009} for a survey on general properties
of $f$-divergences, and also to the textbook by Liese and Vajda \cite{LieseV_book87}.

The numerical optimization of an $f$-divergence subject to simultaneous
constraints on $f_i$-divergences $(i=1, \ldots , L)$
was recently studied in \cite{GSS_IT14}, which showed that for that purpose
it is enough to restrict attention to alphabets of cardinality $L+2$.

The full paper version of our work, which includes several approaches
for the derivation of $f$-divergence inequalities, is available in \cite{SV15}.

\section{Functional Domination} \label{sec: functional domination}
Let $f$ and $g$ be convex functions on $(0, \infty)$ with $f(1)=g(1)=0$, and let $P$ and $Q$
be probability measures defined on a measurable space $(\set{A}, \mathscr{F})$.
If, for $\alpha > 0$, $f(t) \leq \alpha g(t)$ for all $t \in (0, \infty)$ then,
it follows from Definition~\ref{def:fD} that
\begin{align} \label{000}
D_f(P \|Q) \leq \alpha\, D_g(P \| Q).
\end{align}
This simple observation leads to a proof of several inequalities
with the aid of Remark~\ref{remark: equivalence-fD}.

\subsection{Basic Tool} \label{subsec: basic tool of functional domination}

We start this section by proving a general result, which will be helpful in proving various
tight bounds among $f$-divergences.
\begin{theorem} \label{theorem: tight bound}
Let $P \ll Q$, and assume
\begin{itemize}
\item $f$ is convex  on
$(0, \infty)$ with $f(1)=0$;
\item
$g$ is convex  on
$(0, \infty)$ with $g(1)=0$;
\item
$g(t) > 0$ for all $t \in (0,1) \cup (1, \infty)$.
\end{itemize}
 Denote the function $\kappa\colon (0,1) \cup (1, \infty) \to \Reals$
\begin{align} \label{kappadef-1}
\kappa(t) &= \frac{f(t)}{g(t)}, \quad t \in (0,1) \cup (1, \infty)
\end{align}
and
\begin{align}
\label{barkdef}
\bar{\kappa} &= \sup_{t  \in (0,1) \cup (1, \infty)} \kappa(t).
\end{align}
Then,
\begin{enumerate}[a)]
\item \label{theorem: tight bound: parta}
\begin{align}  \label{eq: tight bound--0}
D_f(P \| Q) \leq  \bar{\kappa} \, D_g(P \| Q).
\end{align}
\item \label{theorem: tight bound: partb}
If, in addition, $f'(1)=g'(1)=0$, then
\begin{align}  \label{eq: tight bound}
\sup_{P \neq Q} \frac{D_f(P \| Q)}{D_g(P \| Q)} = \bar{\kappa}.
\end{align}
\end{enumerate}
\end{theorem}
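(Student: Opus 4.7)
My plan is to treat the two parts separately.

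For part (a), the argument is purely pointwise. Since $g(1)=0=f(1)$ and $g(t)>0$ on $(0,1)\cup(1,\infty)$, the definition of $\bar\kappa$ gives $f(t) \leq \bar\kappa\, g(t)$ for every $t>0$ (with equality at $t=1$). Substituting $t = \frac{\text{d}P}{\text{d}Q}(Y)$ for $Y\sim Q$ and integrating via \eqref{eq:fD} yields $D_f(P\|Q) \leq \bar\kappa\, D_g(P\|Q)$. If $\bar\kappa = +\infty$, the inequality is vacuous, since $D_g(P\|Q) \geq 0$ by Proposition~\ref{prop: fGibbs}.

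For part (b), the upper bound is inherited from part (a), so I must only construct pairs $(P,Q)$ with $P \neq Q$ making $D_f(P\|Q)/D_g(P\|Q)$ approach $\bar\kappa$. The natural choice is a two-point construction: on $\{a_0, a_1\}$, set $Q(a_1)=\epsilon$ and $P(a_1)=t_0\epsilon$ for some $t_0 \in (0,1)\cup(1,\infty)$ and small $\epsilon>0$. Then $\frac{\text{d}P}{\text{d}Q}$ takes the values $t_0$ and $s_\epsilon := \frac{1-t_0\epsilon}{1-\epsilon} \to 1$ as $\epsilon\downarrow 0$, and
\begin{align*}
D_f(P\|Q) = \epsilon\, f(t_0) + (1-\epsilon)\, f(s_\epsilon),
\end{align*}
with an identical identity for $g$. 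The key estimate is that $f(1)=0$ and $f'(1)=0$ imply $f(s) = o(s-1)$ as $s\to 1$; since $s_\epsilon - 1 = \frac{(1-t_0)\epsilon}{1-\epsilon} = O(\epsilon)$, the auxiliary contribution $(1-\epsilon)\, f(s_\epsilon)$ is $o(\epsilon)$. Therefore $D_f(P\|Q)/\epsilon \to f(t_0)$, and analogously $D_g(P\|Q)/\epsilon \to g(t_0) > 0$, so that
\begin{align*}
\lim_{\epsilon \downarrow 0} \frac{D_f(P\|Q)}{D_g(P\|Q)} = \kappa(t_0).
\end{align*}
Taking $t_0$ along a sequence with $\kappa(t_0) \to \bar\kappa$ completes the matching lower bound.

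The main obstacle is controlling the second mass point: a priori $(1-\epsilon)\, f(s_\epsilon)$ is of the same order $\Theta(\epsilon)$ as the intended term $\epsilon\, f(t_0)$, which would spoil the limit. The normalization $f'(1) = g'(1) = 0$ is exactly what forces this nuisance term down to $o(\epsilon)$, which explains why part (b) requires this hypothesis beyond part (a). Remark~\ref{remark: equivalence-fD} shows that the normalization is cosmetic at the level of the divergences themselves, but it is genuinely needed for the ratio to attain the pointwise bound. The remaining bookkeeping (the case $\bar\kappa = +\infty$ and the fact that $t_0 \neq s_\epsilon$, hence $P \neq Q$, for all sufficiently small $\epsilon$) is routine.
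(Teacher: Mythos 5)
Your proof is correct and follows essentially the same route as the paper's: part a) by the pointwise bound $f(t)\leq\bar{\kappa}\,g(t)$ integrated against $Q$, and part b) by the two-point construction $Q(a_1)=\epsilon$, $P(a_1)=t_0\epsilon$, using $f(1)=f'(1)=0$ to make the nuisance mass point contribute only $o(\epsilon)$.
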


\medskip

\begin{proof}
See \cite[Theorem~1]{SV15}.
\end{proof}

\begin{remark}
Beyond the restrictions in Theorem~\ref{theorem: tight bound}\ref{theorem: tight bound: parta}),
the only operative restriction imposed by
Theorem~\ref{theorem: tight bound}\ref{theorem: tight bound: partb}) is the differentiability
of the functions $f$ and $g$ at $t=1$. Indeed, we can invoke Remark~\ref{remark: equivalence-fD}
and add $f'(1) \, (1-t)$ to $f(t)$, without changing $D_f$ (and likewise with $g$) and thereby
satisfying the condition in Theorem~\ref{theorem: tight bound}\ref{theorem: tight bound: partb});
the stationary point at~1 must be a minimum of both $f$ and $g$ because of the assumed
convexity, which implies their non-negativity on $(0, \infty)$.
\end{remark}

\vspace*{0.1cm}
\begin{remark}
It is useful to generalize Theorem~\ref{theorem: tight bound}\ref{theorem: tight bound: partb})
by dropping the assumption on the existence of the derivatives at~1. As it is explained
in \cite{SV15}, it is enough to require that the left derivatives of $f$ and $g$ at~1 be
equal to $0$. Analogously, if $\bar{\kappa} = \sup_{0<t<1} \kappa(t)$, it is enough to
require that the right derivatives of $f$ and $g$ at~1 be equal to $0$.
\end{remark}

\subsection{Relationships Among $D(P \| Q)$, $\chi^2(P \| Q)$ and $|P-Q|$}
\label{subsec: RE-chi2-TV functional domination}

\begin{theorem} \
\begin{enumerate}[a)]
\item
If $P \ll Q$ and $c_1, c_2 \geq 0$, then
\begin{align} \label{eq: Improved DiaconisS96}
D(P \| Q) \leq \left(c_1 \, |P-Q| + c_2 \, \chi^2(P \| Q) \right) \log e
\end{align}
holds if $(c_1, c_2) = (0,1)$ and $(c_1, c_2) = \bigl(\tfrac14, \tfrac12\bigr)$. Furthermore,
if $c_1=0$ then $c_2=1$ is optimal, and if $c_2 = \tfrac12$ then $c_1 = \tfrac14$ is optimal.
\item
If $P \ll \gg Q$ and $P \neq Q$, then
\begin{align}
\label{eq: symmetrized RE-chi^2}
\frac{D(P \| Q) + D(Q \| P)}{\chi^2(P \| Q) + \chi^2(Q \| P)} \leq \tfrac12 \, \log e
\end{align}
and the constant in the right side of \eqref{eq: symmetrized RE-chi^2} is the best possible.
\end{enumerate}
\end{theorem}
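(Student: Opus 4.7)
My plan is to reduce each $f$-divergence inequality to a scalar inequality and invoke Theorem~\ref{theorem: tight bound}, supplemented where needed by the linear-term freedom of Remark~\ref{remark: equivalence-fD}. For part (a), the case $(c_1,c_2)=(0,1)$ is direct: take $f(t)=r(t)=t\log t+(1-t)\log e$ and $g(t)=(t-1)^2$, both satisfying $f(1)=g(1)=0$ and $f'(1)=g'(1)=0$, and compute $\bar\kappa=\sup_{t\neq 1}r(t)/(t-1)^2$. I would show $\bar\kappa=\log e$, attained as $t\downarrow 0$, by checking that $r(t)/(t-1)^2$ is monotone on each of $(0,1)$ and $(1,\infty)$ via sign analysis of one derivative. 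Theorem~\ref{theorem: tight bound}\ref{theorem: tight bound: partb}) then yields both the inequality and the optimality of $c_2=1$ when $c_1=0$.

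The case $(c_1,c_2)=(\tfrac14,\tfrac12)$ is more delicate because the naive functional domination fails at $t=0$: $r(0)/\log e=1>\tfrac34$. I would invoke Remark~\ref{remark: equivalence-fD} to shift $r$ to $\tilde r(t)=r(t)+\tfrac{\log e}{4}(t-1)$, leaving $D_r$ unchanged; the shift constant is pinned down by matching the value at $t=0$ and by the one-sided derivatives of the non-smooth $g(t)=\tfrac14|t-1|+\tfrac12(t-1)^2$ at $t=1$. Verifying $\tilde r(t)\le g(t)\log e$ then splits into $t\ge 1$, reducing to $t\ln t+1-t\le\tfrac12(t-1)^2$ (immediate from $\ln t\le t-1$ after one differentiation), and $0<t\le 1$, reducing via algebraic simplification to $\ln t\le\tfrac12(t-1)$ (elementary since both sides vanish at $t=1$ and their difference has negative derivative there). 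For the optimality of $c_1=\tfrac14$ when $c_2=\tfrac12$, I would exhibit a Bernoulli pair $P=(0,1)$, $Q=(q,1-q)$, for which, in nats, $D=-\ln(1-q)$, $|P-Q|=2q$, $\chi^2=q/(1-q)$; Taylor expansion gives $D-\tfrac12\chi^2=q/2+O(q^3)$, so $(D-\tfrac12\chi^2)/|P-Q|\to\tfrac14$ as $q\downarrow 0$, forcing $c_1\ge\tfrac14$.

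Part (b) is another direct application of Theorem~\ref{theorem: tight bound}\ref{theorem: tight bound: partb}): using the duality $D_f(Q\|P)=D_{\widetilde f}(P\|Q)$ with $\widetilde f(t)=tf(1/t)$, one obtains $D(P\|Q)+D(Q\|P)=D_s(P\|Q)$ with $s(t)=(t-1)\log t$, and $\chi^2(P\|Q)+\chi^2(Q\|P)=D_\tau(P\|Q)$ with $\tau(t)=(t-1)^2(t+1)/t$. I would verify $s(1)=\tau(1)=0$, $s'(1)=\tau'(1)=0$, and $\tau>0$ on $(0,\infty)\setminus\{1\}$, then compute $\kappa(t)=s(t)/\tau(t)=\log e\cdot\tfrac{t\ln t}{t^2-1}$. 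The substitution $t=e^u$ yields $\kappa(e^u)=\tfrac{u\log e}{2\sinh u}$, an even function of $u$ bounded above by $\tfrac12\log e$ since $\sinh u\ge u$ for $u\ge 0$, with the bound attained in the limit $u\to 0$. Hence $\bar\kappa=\tfrac12\log e$, giving both the inequality and the optimality of the constant.

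The main obstacle I anticipate is the $(\tfrac14,\tfrac12)$ bound of part (a): unlike the other cases, it cannot be read off a direct scalar comparison of the natural generating functions, and one must recognize that the linear-term repair of Remark~\ref{remark: equivalence-fD} is both necessary (visible already at $t=0$) and sufficient for precisely the shift $b=\tfrac{\log e}{4}$, which is forced by matching the boundary value at $t=0$ with the one-sided derivatives of the non-smooth envelope $g$ at $t=1$.
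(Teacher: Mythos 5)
Your proposal is correct and follows essentially the same route as the paper's proof (given in the full version, [SV15, Theorem~2]): functional domination via Theorem~\ref{theorem: tight bound}, using the linear-shift freedom of Remark~\ref{remark: equivalence-fD} with offset $\tfrac{\log e}{4}(t-1)$ for the $\bigl(\tfrac14,\tfrac12\bigr)$ case, a two-point example to show $c_1=\tfrac14$ is optimal when $c_2=\tfrac12$, and Theorem~\ref{theorem: tight bound}\ref{theorem: tight bound: partb}) applied to the symmetrized generators $(t-1)\log t$ and $(t-1)^2(t+1)/t$ for part~(b). All the scalar verifications you outline (monotonicity of $r(t)/(t-1)^2$, the two-sided check of the shifted domination, and $\sinh u \ge u$) go through as claimed.
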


\begin{proof}
See \cite[Theorem~2]{SV15}.
\end{proof}

\begin{remark}
Inequality~\eqref{eq: Improved DiaconisS96} strengthens the bound in \cite[(2.8)]{DiaconisS96},
\begin{align} \label{eq: DiaconisS96}
D(P \| Q) \leq \tfrac12 \left(|P-Q| + \chi^2(P \| Q)\right) \log e.
\end{align}
Note that the short outline of the suggested proof in \cite[p.~710]{DiaconisS96} leads not
\eqref{eq: DiaconisS96} but to the weaker upper bound
$|P-Q| + \tfrac12 \, \chi^2(P \| Q)$ nats.
\end{remark}

\subsection{An Alternative Proof of Samson's Inequality}
\label{subsec: Samson's Inequality}

For the purpose of this sub-section, we introduce {\em Marton's divergence} \cite{Marton96}:
\begin{align} \label{smarton}
d_2^2(P,Q)  = \min \mathbb{E} \left[ \prob^2[ X \neq Y \, | \, Y ] \right]
\end{align}
where the minimum is over all probability measures $P_{XY}$ with respective
marginals $P_X=P$ and $P_Y=Q$.
From \cite[pp.~558--559]{Marton96}
\begin{align} \label{smarton as f-div}
d_2^2(P,Q) = D_s( P\| Q)
\end{align}
with
\begin{align} \label{eq: s}
s(t) = (t - 1)^2 \; 1\{ t < 1 \}.
\end{align}
Note that Marton's divergence satisfies the triangle inequality \cite[Lemma~3.1]{Marton96},
and $d_2(P,Q)=0$ implies $P=Q$; however, due to its asymmetry, it is not a distance measure.

An analog of Pinsker's inequality, which comes in handy for the proof of Marton's conditional
transportation inequality \cite[Lemma~8.4]{boucheron2013concentration}, is the following bound
due to Samson \cite[Lemma~2]{samson2000concentration}:
\begin{theorem} \label{thm: Samson}
If $P \ll Q$, then
\begin{align} \label{eq: Samson}
d_2^2(P,Q) + d_2^2(Q,P) \leq \tfrac2{\log e} \; D(P\|Q).
\end{align}
\end{theorem}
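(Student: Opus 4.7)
The plan is to rewrite the left-hand side of \eqref{eq: Samson} as a single $f$-divergence $D_h(P\|Q)$ and then invoke Theorem~\ref{theorem: tight bound} with $g$ taken to be the relative-entropy generator $r$ in \eqref{eq: r}. First I would express $d_2^2(Q,P) = D_s(Q\|P)$ as an expectation against $Q$ by the standard change of measure $\mathbb{E}_P[\varphi(\text{d}Q/\text{d}P)] = \mathbb{E}_Q[Z\,\varphi(1/Z)]$ with $Z = \text{d}P/\text{d}Q$. Because $s(1/Z) = (1/Z - 1)^2 \, \mathbf{1}\{Z > 1\}$, this yields
\begin{align*}
d_2^2(P,Q) + d_2^2(Q,P) = \int q\, h\!\left(\tfrac{p}{q}\right)\, \text{d}\mu = D_h(P\|Q),
\end{align*}
where
\begin{align*}
h(t) = \begin{cases} (t-1)^2, & 0 < t \leq 1, \\ (t-1)^2/t, & t > 1. \end{cases}
\end{align*}

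Next I would check the hypotheses of Theorem~\ref{theorem: tight bound} for $f = h$ and $g = r$. Both pieces of $h$ are convex, and the one-sided derivatives at $t = 1$ both vanish, so $h$ is $C^1$ and globally convex on $(0,\infty)$ with $h(1) = h'(1) = 0$; the function $r$ satisfies $r(1) = r'(1) = 0$ and $r(t) > 0$ for $t \neq 1$. Samson's inequality then reduces to verifying
\begin{align*}
\bar\kappa = \sup_{t \in (0,1)\cup(1,\infty)} \frac{h(t)}{r(t)} \leq \frac{2}{\log e}.
\end{align*}
A Taylor expansion around $t = 1$ gives $h(t) \sim (t-1)^2$ and $r(t) \sim \tfrac{1}{2}(t-1)^2 \log e$, so the ratio approaches $2/\log e$ as $t \to 1$; together with part~\ref{theorem: tight bound: partb}) of the theorem this would also show that the constant $2/\log e$ is sharp.

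The main technical step is the calculus verification that the ratio never exceeds its limiting value at $t = 1$. In natural-log units the task splits into showing $(t-1)^2 \leq 2(t \ln t - t + 1)$ on $(0,1)$ and $(t-1)^2 \leq 2t(t \ln t - t + 1)$ on $(1,\infty)$; each inequality is handled by introducing the difference $\psi$, observing $\psi(1) = \psi'(1) = 0$, and noting $\psi'' > 0$ on the relevant interval. Concretely, for $t \in (0,1)$ one takes $\psi(t) = 2t \ln t - t^2 + 1$ with $\psi''(t) = 2/t - 2 > 0$, and for $t > 1$ one takes $\psi(t) = 2t^2 \ln t - 3t^2 + 4t - 1$ with $\psi''(t) = 4 \ln t > 0$. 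Both one-dimensional inequalities are routine, and they carry the whole weight of the proof.
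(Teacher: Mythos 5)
Your proof is correct and follows essentially the same route as the paper's: it rewrites $d_2^2(P,Q)+d_2^2(Q,P)$ as a single $f$-divergence $D_h(P\|Q)$ with $h(t)=(t-1)^2\min\{1,1/t\}$ and applies Theorem~\ref{theorem: tight bound} with $g=r$, which is precisely the functional-domination argument the paper attributes to \cite[Section~3.D]{SV15} and which also yields the sharpness claim \eqref{eq: Samson - tight}. The convexity of $h$, the matching one-sided derivatives at $t=1$, and the two calculus inequalities are all verified correctly.
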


In \cite[Section~3.D]{SV15}, we provide an alternative proof of Theorem~\ref{thm: Samson}, in view of
Theorem~\ref{theorem: tight bound}\ref{theorem: tight bound: partb}), with the following advantages:
\begin{enumerate}[a)]
\item This proof yields the optimality of the constant in \eqref{eq: Samson}, i.e., we prove that
\begin{align}  \label{eq: Samson - tight}
\sup_{P \neq Q} \frac{d_2^2(P,Q) + d_2^2(Q,P)}{D(P \| Q)} = \tfrac2{\log e}
\end{align}
where the supremum is over all probability measures $P,Q$ such that $P \neq Q$ and $P \ll \gg Q$.
\item A simple adaptation of this proof results in a reverse inequality to \eqref{eq: Samson}, which holds
under the boundedness assumption of the relative information (see Section~\ref{subsec: Reverse Samson's Inequality}).
\end{enumerate}

\subsection{Ratio of $f$-Divergence to Total Variation Distance}
\label{subsec: f-divergence and total variation distance}
Let $f \colon (0, \infty) \to \Reals$ be a convex function with $f(1)=0$, and let
$f^\star \colon (0, \infty) \to \Reals$ be given by
\begin{align} \label{eq: fstar}
f^\star(t) = t \, f\left(\tfrac1t\right)
\end{align}
for all $t > 0$. Note that $f^\star$ is also convex, $f^\star(1)=0$, and
$D_f(P \| Q) = D_{f^\star}(Q \| P)$ if $P \ll \gg Q$. By definition, we take
\begin{align}
\label{eq: fstar at 0}
f^\star(0) = \lim_{t \downarrow 0} f^\star(t) = \lim_{u \to \infty} \frac{f(u)}{u}.
\end{align}

Vajda \cite[Theorem~2]{Vajda_1972} showed that the range of an $f$-divergence
is given by
\begin{align} \label{eq: Vajda72}
0 \leq D_f(P \| Q) \leq f(0) + f^{\star}(0)
\end{align}
where every value in this range is attainable by a suitable pair of probability
measures $P \ll Q$. Recalling  Remark \ref{remark: equivalence-fD}, note that
$f_b(0) + f_b^{\star}(0) = f(0) + f^{\star}(0)$ with $f_b(\cdot)$ defined in \eqref{eq: fD3}.
Basu \textit{et al.} \cite[Lemma~11.1]{BasuSP} strengthened \eqref{eq: Vajda72},
showing that
\begin{align}
\label{eq: BasuSP inequality}
D_f(P \| Q) \leq \tfrac12 \left( f(0) +  f^\star(0)\right) \, |P-Q|.
\end{align}
If $f(0)$ and $f^\star(0)$ are finite,  \eqref{eq: BasuSP inequality}
yields a counterpart to a result by Csisz\'{a}r (see
\cite[Theorem~3.1]{Csiszar72}) which implies that if $f \colon (0, \infty) \to \Reals$ is
a strictly convex function, then there exists a real-valued function $\psi_f$
such that $\lim_{x \downarrow 0} \psi_f(x) = 0$, and
\begin{align} \label{eq: Csiszar72}
|P-Q| \leq \psi_f \bigl( D_f(P \| Q) \bigr).
\end{align}
Next, we demonstrate that the constant in \eqref{eq: BasuSP inequality} cannot be improved.

\begin{theorem} \label{theorem: strengthened BasuSP}
If $f \colon (0, \infty) \to \Reals$ is convex with $f(1)=0$, then
\begin{align}
\label{eq: BasuSP tightness}
\sup_{P \neq Q} \frac{D_f(P \| Q)}{|P-Q|} &= \tfrac12 \left( f(0) +  f^\star(0) \right)
\end{align}
where the supremum is over all probability measures $P, Q$ such that $P \ll Q$ and $P \neq Q$.
\end{theorem}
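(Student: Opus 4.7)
The plan is to establish both directions of the claimed identity. The upper bound will follow from Theorem~\ref{theorem: tight bound}\ref{theorem: tight bound: parta}) applied to a linearly shifted version of $f$, and the matching lower bound from an explicit three-atom construction.

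For the upper bound, I would replace $f$ by $f_{b^*}(t) = f(t) + b^*(t-1)$ with $b^* = \tfrac{1}{2}(f(0) - f^\star(0))$, which by Remark~\ref{remark: equivalence-fD} preserves the $f$-divergence and produces the symmetric values $f_{b^*}(0) = f_{b^*}^\star(0) = \tfrac{1}{2}(f(0) + f^\star(0))$. Convexity of $f$ with $f(1)=0$ makes the secant slope $t \mapsto f(t)/(t-1)$ monotone non-decreasing on each of $(0,1)$ and $(1,\infty)$, yielding $f(t) \leq f(0)(1-t)$ on $(0,1]$ and $f(t) \leq f^\star(0)(t-1)$ on $[1,\infty)$; after incorporating the linear shift, these two estimates merge into $f_{b^*}(t) \leq \tfrac{1}{2}(f(0)+f^\star(0))\,|t-1|$ for all $t \geq 0$. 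Applying Theorem~\ref{theorem: tight bound}\ref{theorem: tight bound: parta}) with $g(t) = |t-1|$, so that $D_g(P\|Q) = |P-Q|$ and $\bar\kappa = \tfrac{1}{2}(f(0)+f^\star(0))$, yields $D_f(P\|Q) = D_{f_{b^*}}(P\|Q) \leq \tfrac{1}{2}(f(0)+f^\star(0))\,|P-Q|$.

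For tightness, take $\set{A} = \{1,2,3\}$ and for $\epsilon \in (0, \tfrac{1}{2})$ set $P = (0, \tfrac{1}{2}, \tfrac{1}{2})$ and $Q_\epsilon = (\tfrac{1}{2}, \tfrac{1}{2}-\epsilon, \epsilon)$. Then $P \ll Q_\epsilon$, $|P - Q_\epsilon| = 1$, and
\begin{equation*}
D_f(P\|Q_\epsilon) = \tfrac{1}{2} f(0) + \bigl(\tfrac{1}{2} - \epsilon\bigr) f\!\left(\tfrac{1/2}{1/2-\epsilon}\right) + \epsilon \, f\!\left(\tfrac{1}{2\epsilon}\right).
\end{equation*}
As $\epsilon \downarrow 0$, the middle term vanishes by continuity of $f$ at $1$, while rewriting $\epsilon f(1/(2\epsilon)) = \tfrac{1}{2}\,f(u)/u$ with $u = 1/(2\epsilon) \to \infty$ sends the last term to $\tfrac{1}{2} f^\star(0)$ via \eqref{eq: fstar at 0}. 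Hence $D_f(P\|Q_\epsilon)/|P-Q_\epsilon|$ converges to $\tfrac{1}{2}(f(0)+f^\star(0))$, which together with the upper bound yields the claimed equality; the same sequence also handles the cases where one of $f(0), f^\star(0)$ equals $+\infty$, the ratio then diverging.

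The main obstacle is that Theorem~\ref{theorem: tight bound}\ref{theorem: tight bound: partb}) cannot be invoked directly: the choice $g(t) = |t-1|$ violates the differentiability requirement at $t=1$, and even the one-sided-derivative relaxation mentioned in the second remark forces $\bar\kappa$ to be computed on a single side of $1$, which would not produce the symmetric constant $\tfrac{1}{2}(f(0)+f^\star(0))$. This compels the split above: use part (a) of the theorem to obtain the upper bound, and supply the matching lower bound by the explicit construction.
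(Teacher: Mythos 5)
Your proof is correct and follows essentially the same route as the paper's: the upper bound is the Basu--Shioya--Park inequality \eqref{eq: BasuSP inequality}, obtained by the same functional-domination argument after the symmetrizing linear shift permitted by Remark~\ref{remark: equivalence-fD}, and tightness is shown by an explicit sequence of pairs whose ratio tends to $\tfrac12\bigl(f(0)+f^\star(0)\bigr)$. The only cosmetic difference is that the paper achieves the supremum with a two-atom pair (a point mass $P$ against $Q_\epsilon$ with $Q_\epsilon$ escaping to the other atom), whereas you use a three-atom construction; both work, and your closing remark about why Theorem~\ref{theorem: tight bound}\ref{theorem: tight bound: partb}) is inapplicable here is accurate.
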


\begin{proof}
See \cite[Theorem~5]{SV15}.
\end{proof}

\begin{remark}
Csisz\'ar \cite[Theorem~2]{Csiszar67b} showed that if $f(0)$ and $f^\star(0)$ are finite
and $P \ll Q$, then there exists a constant $C_f > 0$ which depends only on $f$ such that
$D_f(P \| Q) \leq C_f \, \sqrt{|P-Q|}$. Note that,
if $|P-Q| < 1$, then this inequality is superseded by \eqref{eq: BasuSP inequality}
where the constant is not only explicit but is the best possible according to
Theorem~\ref{theorem: strengthened BasuSP}.
\end{remark}

A direct application of Theorem~\ref{theorem: strengthened BasuSP} yields
\begin{corollary}
\begin{align}
\label{eq: sup4}
& \sup_{P \neq Q} \frac{d_2^2(P,Q)}{|P-Q|} = \frac12, \\[0.1cm]
\label{eq: sup5}
& \sup_{P \neq Q} \frac{d_2^2(P,Q) + d_2^2(Q,P)}{|P-Q|} = 1
\end{align}
where the supremum in \eqref{eq: sup4} is over all $P \ll Q$ with $P \neq Q$, and
the supremum in \eqref{eq: sup5} is over all $P \ll \gg Q$ with $P \neq Q$.
\end{corollary}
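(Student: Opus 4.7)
\medskip

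The plan is to reduce both identities to Theorem~\ref{theorem: strengthened BasuSP} by recognizing Marton's divergence (and its symmetrization) as an $f$-divergence and then computing $f(0)$ and $f^\star(0)$ for the relevant convex function.

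For \eqref{eq: sup4}, I would start from the identification $d_2^2(P,Q) = D_s(P\|Q)$ given by \eqref{smarton as f-div}--\eqref{eq: s}, so that
\begin{align*}
\sup_{P \neq Q} \frac{d_2^2(P,Q)}{|P-Q|} = \sup_{P \neq Q} \frac{D_s(P\|Q)}{|P-Q|},
\end{align*}
with the supremum over $P \ll Q$, $P \neq Q$. Since $s$ is convex on $(0,\infty)$ with $s(1)=0$, Theorem~\ref{theorem: strengthened BasuSP} applies and gives $\tfrac12(s(0) + s^\star(0))$. A direct calculation yields $s(0) = (0-1)^2 \cdot 1\{0<1\} = 1$, while by \eqref{eq: fstar at 0} and the fact that $s$ vanishes on $[1,\infty)$, $s^\star(0) = \lim_{u \to \infty} s(u)/u = 0$; hence the supremum equals $\tfrac12$.

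For \eqref{eq: sup5}, the key algebraic step is that, under $P \ll \gg Q$,
\begin{align*}
d_2^2(P,Q) + d_2^2(Q,P) = D_s(P\|Q) + D_s(Q\|P) = D_s(P\|Q) + D_{s^\star}(P\|Q) = D_h(P\|Q),
\end{align*}
where $h := s + s^\star$, using the basic duality $D_f(Q\|P) = D_{f^\star}(P\|Q)$ noted right after \eqref{eq: fstar}. The function $h$ is convex on $(0,\infty)$ with $h(1) = 0$, so Theorem~\ref{theorem: strengthened BasuSP} again applies and gives $\tfrac12(h(0) + h^\star(0))$. Moreover, $(s+s^\star)^\star = s^\star + (s^\star)^\star = s^\star + s = h$, so $h$ is self-dual and it suffices to compute $h(0) = s(0) + s^\star(0) = 1 + 0 = 1$; hence $\tfrac12(h(0)+h^\star(0)) = 1$, proving \eqref{eq: sup5}.

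There is no serious obstacle here once Theorem~\ref{theorem: strengthened BasuSP} is in hand; the only points requiring a little care are (i) verifying the convexity of $s$ at the kink $t=1$ (both one-sided derivatives vanish there, which is also what lets $s$ fit the hypotheses) and (ii) correctly passing from $D_s(Q\|P)$ to an $f$-divergence from $P$ to $Q$ via the conjugate $s^\star$, which needs $P \ll \gg Q$ and is precisely why the supremum in \eqref{eq: sup5} is taken over mutually absolutely continuous pairs.
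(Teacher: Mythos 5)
Your proof is correct and takes essentially the same route as the paper: the corollary is obtained as a direct application of Theorem~\ref{theorem: strengthened BasuSP}, first to $s$ and then to the self-dual function $s+s^\star$, using $s(0)=1$ and $s^\star(0)=0$. The only detail left implicit is that restricting the supremum in \eqref{eq: sup5} to $P \ll \gg Q$ does not lower it, since the extremal (binary, strictly positive) pairs in Theorem~\ref{theorem: strengthened BasuSP} are mutually absolutely continuous.
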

\begin{proof}
See \cite[Corollary~1]{SV15}.
\end{proof}

\begin{remark}
The results in \eqref{eq: sup4} and \eqref{eq: sup5} form counterparts of
\eqref{eq: Samson - tight}.
\end{remark}

\section{Bounded Relative Information} \label{sec:bounded}
In this section we show that it is possible to find bounds among $f$-divergences
without requiring a strong condition of functional domination (see Section~\ref{sec: functional domination})
as long as the relative information is upper and/or lower bounded almost surely.

\subsection{Definition of $\beta_1$ and $ \beta_2$.}
The following notation is used throughout the rest of the paper. Given a pair of probability measures $(P,Q)$
on the same measurable space, denote $\beta_1, \beta_2 \in [0,1]$ by
\begin{align}
\label{eq: beta1}
\beta_1
&= \exp\bigl(- D_\infty (P\|Q)\bigr), \\[0.1cm]
\label{eq: beta2}
\beta_2
&= \exp\bigl(-D_\infty (Q\|P)\bigr)
\end{align}
with the convention that if $D_\infty (P\|Q) = \infty$, then $\beta_1 = 0$,
and if $D_\infty (Q\|P) = \infty$, then $\beta_2 = 0$. Note that if $\beta_1 >0$, then $P \ll Q$,
while $\beta_2 >0$ implies $Q \ll P$. Furthermore, if $P \ll \gg Q$, then with $Y \sim Q$,
\begin{align}
\label{eq: beta1-alt}
\beta_1 &= \essinf \frac{\text{d}Q}{\text{d}P} \, (Y) = \left( \esssup \frac{\text{d}P}{\text{d}Q} \, (Y) \right)^{-1},\\
\label{eq: beta2-alt}
\beta_2 &= \essinf \frac{\text{d}P}{\text{d}Q} \, (Y) = \left( \esssup \frac{\text{d}Q}{\text{d}P} \, (Y) \right)^{-1}.
\end{align}
The following example illustrates an important case in which $\beta_1$ and
$\beta_2$ are positive.

\begin{example} (\textit{Shifted Laplace distributions.}) \label{example: two Laplacians}
Let $P$ and $Q$ be the probability measures whose probability density functions
are, respectively, given by $f_{\lambda} ( \cdot - a_0 )$
and $f_{\lambda}( \cdot - a_1 )$ with
\begin{align} \label{eq: two Laplacians}
f_{\lambda} (x) =  \tfrac{\lambda}{2} \, \exp(-\lambda |x|), \quad x \in \Reals
\end{align}
where $\lambda > 0$. In this case,
\eqref{eq: two Laplacians} yields
\begin{align}  \label{eq: beta - 2 Laplacians}
\beta_1 = \beta_2 = \exp\bigl(-\lambda \, |a_1-a_0| \bigr) \in (0,1].
\end{align}
\end{example}



\subsection{Basic Tool} \label{subsec:bounds among fD}
Since $\beta_1 =1 \Leftrightarrow \beta_2 =1 \Leftrightarrow P = Q$, it
is advisable to avoid trivialities by excluding that case.
\begin{theorem} \label{thm: fD1}
Let $f$ and $g$ satisfy the assumptions in Theorem~\ref{theorem: tight bound},
and assume that $(\beta_1, \beta_2) \in [0,1)^2$. Then,
\begin{align} \label{eq:fD bound2}
D_f(P \| Q)
&\leq \kappa^\star \; D_g(P \| Q)
\end{align}
where
\begin{align}
\kappa^\star = \sup_{\beta \in (\beta_2, 1) \cup (1,\beta_1^{-1})} \kappa(\beta)
\end{align}
and $\kappa(\cdot)$ is defined in \eqref{kappadef-1}.
\end{theorem}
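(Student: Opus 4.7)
My plan is to reduce the statement to a pointwise functional-domination inequality on the essential range of the Radon--Nikodym derivative, which is confined by the bounded-relative-information hypothesis, and then integrate. The argument parallels the proof of Theorem~\ref{theorem: tight bound}\ref{theorem: tight bound: parta}) but restricts the functional domination to a smaller interval, which yields the sharper constant $\kappa^\star$ in place of $\bar\kappa$.

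First, I would use \eqref{eq: beta1-alt}--\eqref{eq: beta2-alt} to conclude that, with $Y \sim Q$, the random variable $Z = \tfrac{\mathrm{d}P}{\mathrm{d}Q}(Y)$ satisfies $\beta_2 \leq Z \leq \beta_1^{-1}$ $Q$-almost surely, with the convention $\beta_1^{-1} = +\infty$ when $\beta_1 = 0$. Because $(\beta_1,\beta_2) \in [0,1)^2$, the set $(\beta_2,1) \cup (1,\beta_1^{-1})$ on which $\kappa^\star$ is defined is a nonempty subset of $(0,\infty)$, so $\kappa^\star$ is a well-defined element of $(-\infty,+\infty]$.

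Second, I would verify the pointwise bound $f(t) \leq \kappa^\star\, g(t)$ for every $t \in [\beta_2,\beta_1^{-1}]$ in the support of $Z$. On $(\beta_2,1)\cup(1,\beta_1^{-1})$ this is immediate from $g(t) > 0$ and the definition of $\kappa^\star$, while at $t = 1$ both sides vanish by hypothesis. Continuity of $f$ and $g$ on $(0,\infty)$---a consequence of convexity---carries the inequality to any positive endpoint actually attained by $Z$, and the convention \eqref{eq: f at 0} extends it to $t = 0$ if $\beta_2 = 0$ and $Z = 0$ has positive $Q$-mass. Integrating $f(Z) \leq \kappa^\star g(Z)$ against $Q$ and invoking \eqref{eq:fD} then yields \eqref{eq:fD bound2}.

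The main obstacle I anticipate is the endpoint analysis when $\beta_2 = 0$ or $\beta_1 = 0$: the supremum $\kappa^\star$ is taken over an open set, so one must ensure that limits of $\kappa(t)$ toward $0$ or $\infty$ respect the bound, which requires care when $f$ or $g$ blows up at the boundary. The monotonicity of $f$ and $g$ near $0$ and $\infty$ forced by convexity together with $f(1)=g(1)=0$ and $g > 0$ on $(0,1)\cup(1,\infty)$ handles this cleanly. A pleasant byproduct is that the specialization $\beta_1 = \beta_2 = 0$ recovers Theorem~\ref{theorem: tight bound}\ref{theorem: tight bound: parta}) with $\kappa^\star = \bar\kappa$, so the new result subsumes the unconditional one.
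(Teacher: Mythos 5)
Your proof is correct and follows essentially the same route as the paper's: you restrict the pointwise domination $f(t)\leq\kappa^\star\,g(t)$ to the essential range $[\beta_2,\beta_1^{-1}]$ of $\tfrac{\mathrm{d}P}{\mathrm{d}Q}$ determined by \eqref{eq: beta1-alt}--\eqref{eq: beta2-alt}, handle $t=1$ and the endpoints via $f(1)=g(1)=0$ and the continuity/limit conventions, and then integrate against $Q$ using \eqref{eq:fD}. No substantive difference from the paper's argument.
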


\begin{proof}
See \cite[Theorem~5]{SV15}.
\end{proof}

Note that if $\beta_1 = \beta_2 = 0$, then Theorem~\ref{thm: fD1} does not improve
upon Theorem~\ref{theorem: tight bound}\ref{theorem: tight bound: parta}).

\vspace*{0.1cm}
\begin{remark} \label{remark: play}
In the application of Theorem~\ref{thm: fD1}, it is often convenient to make use of the
freedom afforded by Remark~\ref{remark: equivalence-fD} and choose the corresponding offsets
such that:
\begin{itemize}
\item the positivity property of $g$ required by Theorem~\ref{thm: fD1} is satisfied;
\item the lowest $\kappa^\star$ is obtained.
\end{itemize}
\end{remark}

\begin{remark} \label{remark: tight constants}
Similarly to the proof of Theorem~\ref{theorem: tight bound}\ref{theorem: tight bound: partb}),
under the conditions therein, one can verify that the constants in Theorem~\ref{thm: fD1}
are the best possible among all probability measures $P,Q$ with given $(\beta_1, \beta_2) \in [0,1)^2$.
\end{remark}

\begin{remark}\label{remark:reverse}
Note that if we swap the assumptions on $f$ and $g$ in Theorem~\ref{thm: fD1}, the same result translates into
\begin{align}\label{eq:fD bound1}
\inf_{\beta \in (\beta_2, 1) \cup (1,\beta_1^{-1})} \kappa(\beta) \cdot D_g(P \| Q) \leq D_f(P \| Q).
\end{align}
Furthermore, provided both $f$ and $g$ are positive (except at $t=1$) and $\kappa$ is monotonically increasing,
Theorem~\ref{thm: fD1} and \eqref{eq:fD bound1} result in
\begin{align}
\kappa(\beta_2) \, D_g(P \| Q) & \leq D_f(P \| Q) \label{eq:fD bound1_pc} \\
&\leq \kappa(\beta_1^{-1}) \, D_g(P \| Q). \label{eq:fD bound2_pc}
\end{align}
In this case, if $\beta_1 > 0$, sometimes it is convenient to replace $\beta_1 > 0$ with
$\beta_1^\prime \in (0, \beta_1)$
at the expense of loosening the bound. A similar observation applies to $\beta_2$.
\end{remark}

\begin{example} \label{example: non-monotonic}
If $f(t) = (t-1)^2$ and $g(t)= |t-1|$, we get
\begin{align} \label{eq1: chi square - TV}
\chi^2 (P\|Q) \leq \max\{ \beta_1^{-1}-1, 1 - \beta_2\}  \; |P-Q|.
\end{align}
\end{example}

\subsection{Bounds on $\frac{D(P\|Q)}{D(Q\|P)}$}
\label{subsec: Bounds on RE/dual}

The remaining part of this section is devoted to various
applications of Theorem~\ref{thm: fD1}. From this point,
we make use of the definition of
$r \colon (0, \infty) \to [0, \infty)$ in \eqref{eq: r}.

An illustrative application of Theorem~\ref{thm: fD1} gives
upper and lower bounds on the ratio of relative entropies.

\begin{theorem}  \label{thm: bounds RE and dual}
Let $P \ll \gg Q$, $P \neq Q$, and $(\beta_1, \beta_2) \in (0,1)^2$.
Let $\kappa \colon (0,1)\cup(1,\infty) \to (0, \infty)$ be defined as
\begin{align}
\label{eq: kappa RE and dual}
\kappa(t) = \frac{t \log t + (1-t) \, \log e}{(t-1) \log e - \log t}.
\end{align}
Then,
\begin{align}
\label{eq: bounds RE and dual}
\kappa(\beta_2)  \leq \frac{D(P \| Q)}{D(Q \| P)} \leq \kappa(\beta_1^{-1}).
\end{align}
\end{theorem}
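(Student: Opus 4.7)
The plan is to recognize the ratio $D(P\|Q)/D(Q\|P)$ as a ratio of two $f$-divergences \emph{from $P$ to $Q$} and then invoke Theorem~\ref{thm: fD1}. Specifically, I would take $f = r$ as in~\eqref{eq: r} and, using the construction~\eqref{eq: fstar},
\begin{equation*}
g(t) = r^\star(t) = t\,r(1/t) = (t-1)\log e - \log t.
\end{equation*}
Since $P \ll \gg Q$, one has $D(P\|Q) = D_r(P\|Q)$ and $D(Q\|P) = D_{r^\star}(P\|Q)$, so the left-hand side of~\eqref{eq: bounds RE and dual} is $D_f(P\|Q)/D_g(P\|Q)$ and the function $\kappa$ in~\eqref{eq: kappa RE and dual} is exactly $f(t)/g(t)$, matching~\eqref{kappadef-1}.

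Next, I would verify the hypotheses of Theorem~\ref{thm: fD1}. Both $r$ and $r^\star$ are convex on $(0,\infty)$ and vanish at $t=1$; moreover $r'(1) = 0$ and $(r^\star)'(1) = 0$, and both are strictly convex on $(0,\infty)$, hence strictly positive on $(0,1) \cup (1,\infty)$. In particular the positivity condition on $g$ demanded by Theorem~\ref{thm: fD1} holds, and symmetrically $r$ itself qualifies as the ``$g$'' needed to use Remark~\ref{remark:reverse}.

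The main step, and the one I expect to be the real obstacle, is to show that $\kappa$ is strictly increasing on $(0,\infty)\setminus\{1\}$, with continuous extension $\kappa(1)=1$ (which follows from a double L'Hôpital and $r''(1)=(r^\star)''(1)=\log e$). A direct differentiation gives
\begin{equation*}
\kappa'(t)\,\bigl(r^\star(t)\bigr)^2 = (\log e)^2\,\frac{(t-1)^2}{t} - (\log t)^2,
\end{equation*}
so after substituting $\log t = (\log e)\ln t$ everything reduces to the elementary inequality $(t-1)^2/t \geq (\ln t)^2$, strict for $t \neq 1$. This in turn I would prove by setting $h(t) = \sqrt{t} - 1/\sqrt{t} - \ln t$, observing $h(1)=0$ and $h'(t) = (\sqrt{t}-1)^2/(2t^{3/2}) \geq 0$, which yields $|t-1|/\sqrt{t} \geq |\ln t|$ with equality only at $t=1$.

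Once monotonicity is in hand, the rest is immediate: on the interval $(\beta_2,1) \cup (1,\beta_1^{-1})$, the supremum of $\kappa$ is attained at the right endpoint $\beta_1^{-1}$ and the infimum at the left endpoint $\beta_2$. Applying Theorem~\ref{thm: fD1} with $(f,g) = (r,r^\star)$ gives the upper bound $D(P\|Q) \leq \kappa(\beta_1^{-1})\,D(Q\|P)$; applying it with the roles of $f$ and $g$ reversed, as in Remark~\ref{remark:reverse}, gives $\kappa(\beta_2)\,D(Q\|P) \leq D(P\|Q)$. Dividing through by $D(Q\|P)>0$ (positive because $P \neq Q$) yields~\eqref{eq: bounds RE and dual}.
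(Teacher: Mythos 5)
Your proof is correct and follows exactly the route the paper intends: identify $D(Q\|P)=D_{r^\star}(P\|Q)$ with $r^\star(t)=t\,r(1/t)$, check that $\kappa=r/r^\star$ is monotonically increasing via the elementary inequality $(t-1)^2/t\geq(\ln t)^2$, and then invoke Theorem~\ref{thm: fD1} together with Remark~\ref{remark:reverse} to get $\kappa(\beta_2)\,D(Q\|P)\leq D(P\|Q)\leq\kappa(\beta_1^{-1})\,D(Q\|P)$. The computations (in particular $\kappa'(t)\,\bigl(r^\star(t)\bigr)^2=(\log e)^2(t-1)^2/t-(\log t)^2$ and the auxiliary function $h(t)=\sqrt{t}-1/\sqrt{t}-\ln t$) all check out, so there is nothing to add.
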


\begin{proof}
See \cite[Theorem~6]{SV15}.
\end{proof}

\subsection{Reverse Samson's Inequality}
\label{subsec: Reverse Samson's Inequality}
The next result gives a counterpart to Samson's inequality \eqref{eq: Samson}.

\begin{theorem} \label{thm: Samson - refined}
Let  $(\beta_1, \beta_2) \in (0,1)^2$.
Then,
\begin{align}
\label{eq: Samson - refined}
\inf \frac{d_2^2(P,Q) + d_2^2(Q,P)}{ D(P\|Q)} = \min \bigl\{\kappa(\beta_1^{-1}), \, \kappa(\beta_2) \bigr\}
\end{align}
where the infimum is over all $P \ll Q$ with given $(\beta_1, \beta_2)$, and
where $\kappa\colon (0,1)\cup(1,\infty) \to \bigl(0, \tfrac{2}{\log e} \bigr)$
is given by
\begin{align}
\kappa(t) = \frac{(t-1)^2}{r(t) \, \max\{1,t\}} , \quad t \in (0,1) \cup (1, \infty).
\label{eq: kappa1 - Samson}
\end{align}
\end{theorem}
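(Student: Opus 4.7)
The plan is to cast the identity as an application of Theorem~\ref{thm: fD1} in its reverse (lower-bound) form stated in Remark~\ref{remark:reverse}, combined with the tightness assertion of Remark~\ref{remark: tight constants}. The first step is to express $d_2^2(P,Q) + d_2^2(Q,P)$ as a single $f$-divergence from $P$ to $Q$. From \eqref{smarton as f-div}--\eqref{eq: s} we have $d_2^2(P,Q) = D_s(P\|Q)$ with $s(t) = (t-1)^2 \, 1\{t < 1\}$, and the duality $D_s(Q\|P) = D_{s^\star}(P\|Q)$ (valid since $\beta_2 > 0$ forces $P \ll \gg Q$) together with \eqref{eq: fstar} yields $s^\star(t) = t\,s(1/t) = \frac{(t-1)^2}{t}\, 1\{t>1\}$. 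Consequently,
\begin{align*}
d_2^2(P,Q) + d_2^2(Q,P) = D_f(P \| Q), \quad f(t) = \frac{(t-1)^2}{\max\{1,t\}}.
\end{align*}

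Next, I would verify the hypotheses needed for Remarks~\ref{remark:reverse} and~\ref{remark: tight constants}. The function $f$ is convex on $(0,\infty)$ (each piece is convex, and the one-sided derivatives agree at $t=1$), satisfies $f(1) = 0$, is strictly positive on $(0,1)\cup(1,\infty)$, and a short computation gives $f'(1^-) = f'(1^+) = 0$. The relative entropy function $r$ defined in \eqref{eq: r} shares all these properties. Applying the lower-bound form \eqref{eq:fD bound1} with this $f$ and $g = r$, and then invoking the tightness assertion of Remark~\ref{remark: tight constants}, yields
\begin{align*}
\inf \frac{d_2^2(P,Q) + d_2^2(Q,P)}{D(P\|Q)} = \inf_{t \in (\beta_2,\,1) \cup (1,\,\beta_1^{-1})} \kappa(t),
\end{align*}
with $\kappa(t) = f(t)/r(t)$ exactly as in \eqref{eq: kappa1 - Samson}, and the left-hand infimum taken over pairs with prescribed $(\beta_1, \beta_2)$.

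It remains to evaluate the right-hand infimum. I claim $\kappa$ is \emph{increasing} on $(0,1)$ and \emph{decreasing} on $(1,\infty)$; by continuity this yields
\begin{align*}
\inf_{t \in (\beta_2,\,1) \cup (1,\,\beta_1^{-1})} \kappa(t) = \min\bigl\{\kappa(\beta_2),\, \kappa(\beta_1^{-1})\bigr\},
\end{align*}
which is the claimed identity. For the monotonicity, differentiating the two pieces of $\kappa$ reduces the sign of $\kappa'$ to the sign of two explicit auxiliary functions: on $(1,\infty)$, to that of $\psi(t) = 2 t \log t - (t^2-1)\log e$; on $(0,1)$, to that of $\phi(t) = -(t+1)\log t + 2(t-1)\log e$. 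Both $\phi$ and $\psi$ vanish together with their first derivatives at $t=1$, and a direct check shows $\phi''(t) = \tfrac{(1-t)\log e}{t^2}$ and $\psi''(t) = \tfrac{2(1-t)\log e}{t}$, from which $\phi > 0$ on $(0,1)$ and $\psi < 0$ on $(1,\infty)$ follow by integrating twice from $t=1$. I expect this monotonicity analysis to be the technical heart of the argument; once it is in place, the rest reduces to a bookkeeping application of Theorem~\ref{thm: fD1} on top of the duality that identifies the symmetrized Marton divergence as a single $f$-divergence.
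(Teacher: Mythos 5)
Your proposal is correct and follows essentially the same route as the paper's proof in \cite[Theorem~7]{SV15}: identify $d_2^2(P,Q)+d_2^2(Q,P)$ with $D_f(P\|Q)$ for $f(t)=(t-1)^2/\max\{1,t\}$ via \eqref{smarton as f-div} and \eqref{eq: fstar}, apply the bounded-relative-information tool of Theorem~\ref{thm: fD1} in its reverse form together with the tightness of its constants, and reduce everything to showing that $\kappa$ in \eqref{eq: kappa1 - Samson} is increasing on $(0,1)$ and decreasing on $(1,\infty)$. Your monotonicity computation (the sign reduction to $\phi$ and $\psi$ and the second-derivative argument) checks out, so no gaps.
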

\begin{proof}
See \cite[Theorem~7]{SV15}.
\end{proof}

\subsection{Local Behavior of $f$-Divergences}
\label{sec:On the Local Behavior of f-divergences}
Another application of Theorem~\ref{thm: fD1} shows that the local behavior of $f$-divergences
differs by only a constant, provided that the first distribution approaches the reference measure
in a certain strong sense.

\begin{theorem} \label{thm: local behavior fD}
Suppose that $\{P_n\}$, a sequence of probability measures defined on a measurable space
$(\set{A}, \mathscr{F})$, converges to $Q$ (another probability measure on the same space)
in the sense that, for $Y \sim Q$,
\begin{align}
\label{eq: 1st condition}
\lim_{n \to \infty} \esssup \frac{\text{d}P_n}{\text{d}Q} \, (Y) = 1
\end{align}
where it is assumed that $P_n \ll Q$ for all sufficiently large $n$.
If $f$ and $g$ are convex on $(0, \infty)$ and they are positive except at $t=1$
(where they are 0), then
\begin{align}  \label{eq: limit fD, Pn-->Q}
\lim_{n \to \infty} D_f(P_n \| Q) = \lim_{n \to \infty} D_g(P_n \| Q) = 0,
\end{align}
and
{\small
\begin{align}
\min\{ \kappa ( 1^-) , \kappa (1^+) \} \leq \lim_{n \to \infty}
\frac{D_f(P_n \| Q)}{D_g(P_n \| Q)} \leq \max\{ \kappa ( 1^-) , \kappa (1^+) \}
\label{eq: limit of ratio of f-divergences}
\end{align}}
\hspace*{-0.17cm} where we have indicated the left and right limits of the function $\kappa (\cdot)$,
defined in \eqref{kappadef-1}, at $1$ by $\kappa ( 1^-)$ and $\kappa (1^+)$, respectively.
\end{theorem}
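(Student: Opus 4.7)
The plan is to apply Theorem~\ref{thm: fD1} together with its reverse form (Remark~\ref{remark:reverse}) and pass to the limit as $n \to \infty$. Rewriting \eqref{eq: 1st condition} via \eqref{eq: beta1-alt}, the hypothesis reads $\beta_1^{(n)} \uparrow 1$; the natural symmetric strong-convergence interpretation (which must be in force for the ratio in \eqref{eq: limit of ratio of f-divergences} to be well-defined, since otherwise $D_f(P_n\|Q)$ may diverge for $f$ with $f(0)=\infty$) likewise gives $\beta_2^{(n)} \uparrow 1$. If $f'(1)$ or $g'(1)$ is nonzero, I would first invoke Remark~\ref{remark: equivalence-fD} to subtract the affine term $f'(1)(t-1)$ from $f$, and analogously from $g$; this leaves both $f$-divergences untouched while enforcing the stationarity condition needed for the reverse form.

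With $I_n := (\beta_2^{(n)}, 1) \cup (1, 1/\beta_1^{(n)})$, Theorem~\ref{thm: fD1} and its swap-$f$-and-$g$ variant yield the sandwich
\begin{equation*}
\inf_{\beta \in I_n} \kappa(\beta) \cdot D_g(P_n\|Q) \leq D_f(P_n\|Q) \leq \sup_{\beta \in I_n} \kappa(\beta) \cdot D_g(P_n\|Q).
\end{equation*}
Each sub-interval of $I_n$ shrinks to $\{1\}$, from the left and right respectively, so by continuity of $\kappa$ on $(0,1) \cup (1,\infty)$ together with the existence of the one-sided limits $\kappa(1^{\pm})$, the outer supremum converges to $\max\{\kappa(1^-), \kappa(1^+)\}$ and the outer infimum to $\min\{\kappa(1^-), \kappa(1^+)\}$. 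Dividing through by $D_g(P_n\|Q)>0$ (valid once $P_n \neq Q$, which holds eventually) delivers \eqref{eq: limit of ratio of f-divergences}. For \eqref{eq: limit fD, Pn-->Q}, observe $Z_n := (\text{d}P_n/\text{d}Q)(Y) \in [\beta_2^{(n)}, 1/\beta_1^{(n)}]$ almost surely; by convexity of $f$ with $f(1)=0$ attaining its unique minimum at $1$, one has $0 \leq f(Z_n) \leq \max\{f(\beta_2^{(n)}), f(1/\beta_1^{(n)})\}$ a.s., and the deterministic right-hand side tends to $0$ by continuity of $f$ at $1$. Bounded convergence then yields $D_f(P_n\|Q) = \mathbb{E}[f(Z_n)] \to 0$, and likewise $D_g(P_n\|Q) \to 0$.

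The main obstacle I anticipate is guaranteeing the two-sided convergence $\sup_{I_n}\kappa \to \max\{\kappa(1^\pm)\}$ and $\inf_{I_n}\kappa \to \min\{\kappa(1^\pm)\}$: this requires that the ratio $\kappa = f/g$ actually possess finite one-sided limits at $1$, a mild matching-order-of-vanishing condition on $f$ and $g$ near the reference point. Under convexity and the normalization $f'(1)=g'(1)=0$, these limits should follow from a second-order Taylor-type expansion (or a one-sided L'Hopital argument applied to the left/right derivatives of convex functions), but confirming their existence and cleanly matching them to the shrinking sup/inf over $I_n$ is the technical crux on which the stated conclusion hinges.
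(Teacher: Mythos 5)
Your overall strategy is the right one and matches the paper's: sandwich $D_f(P_n\|Q)$ between $\inf_{I_n}\kappa \cdot D_g(P_n\|Q)$ and $\sup_{I_n}\kappa \cdot D_g(P_n\|Q)$ via Theorem~\ref{thm: fD1} and its swapped form in Remark~\ref{remark:reverse}, then let the interval $I_n=(\beta_2^{(n)},1)\cup(1,1/\beta_1^{(n)})$ collapse to $\{1\}$. (Two side comments: the preliminary normalization via Remark~\ref{remark: equivalence-fD} is both unnecessary and dangerous here --- since $f,g\ge 0$ vanish at $t=1$, that point is a global minimum, so whenever $f'(1)$ exists it is automatically $0$; and if you \emph{did} add an affine term you would change $\kappa$ and hence the constants $\kappa(1^{\pm})$ appearing in the conclusion. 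Also, the existence of the one-sided limits $\kappa(1^{\pm})$ is implicitly assumed by the statement itself, so you need not manufacture it from a Taylor expansion.)

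The genuine gap is the step where you declare that $\beta_2^{(n)}\uparrow 1$. Condition \eqref{eq: 1st condition} is one-sided: via \eqref{eq: beta1-alt} it says exactly $\beta_1^{(n)}\to 1$, and it does \emph{not} imply $\beta_2^{(n)}=\essinf \frac{\text{d}P_n}{\text{d}Q}(Y)\to 1$. Concretely, take $Q$ uniform on $[0,1]$ and $P_n$ uniform on $[0,1-\tfrac1n]$: then $\esssup \frac{\text{d}P_n}{\text{d}Q}(Y)=\frac{n}{n-1}\to 1$, yet $\beta_2^{(n)}=0$ for every $n$, so the left half of $I_n$ is all of $(0,1)$ and never shrinks, and your sandwich gives nothing on that side. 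Your justification (``must be in force for the ratio to be well-defined'') is circular --- you are assuming a consequence of the conclusion to install a hypothesis you need. Worse, the gap is not patchable by a cleverer argument under the stated hypothesis: in the same example, with $f(t)=(t-1)^2$ and $g(t)=|t-1|$ one computes $\chi^2(P_n\|Q)/|P_n-Q|\to\tfrac12$ while $\kappa(1^-)=\kappa(1^+)=0$, and with $f(t)=t-1-\ln t$ one gets $D_f(P_n\|Q)=+\infty$ for all $n$. So the two-sided condition $\essinf \frac{\text{d}P_n}{\text{d}Q}(Y)\to 1$ must genuinely be supplied as part of the hypothesis (this is evidently the intended reading of ``converges to $Q$''), not derived; once it is in force, the rest of your limiting argument, including the endpoint bound $0\le f(Z_n)\le\max\{f(\beta_2^{(n)}),f(1/\beta_1^{(n)})\}$ for \eqref{eq: limit fD, Pn-->Q}, goes through. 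You correctly located the crux but resolved it by fiat rather than by proof.
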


\begin{proof}
See \cite[Theorem~9]{SV15}.
\end{proof}

\begin{corollary}\label{cor:ratios}
Let $\{P_n \ll Q \}$ converge to $Q$ in the sense of  \eqref{eq: 1st condition}.
Then,  $D(P_n \| Q)$ and $D(Q \| P_n)$ vanish
as $n \to \infty$ with
\begin{align} \label{eq: limit of RE/dual}
\lim_{n \to \infty} \frac{D(P_n \| Q)}{D(Q \| P_n)} &= 1.
\end{align}
\end{corollary}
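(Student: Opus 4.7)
The plan is to specialize Theorem~\ref{thm: local behavior fD} to the pair $(f,g)=(r,r^\star)$, where $r$ is the relative-entropy generator from \eqref{eq: r} and $r^\star$ is its conjugate in the sense of \eqref{eq: fstar}. Explicitly, $r(t)=t\log t+(1-t)\log e$ and $r^\star(t)=-\log t+(t-1)\log e$. Using the construction in \eqref{eq: fstar} together with the linear-offset invariance of Remark~\ref{remark: equivalence-fD}, one verifies $D_r(P_n\|Q)=D(P_n\|Q)$ and $D_{r^\star}(P_n\|Q)=D(Q\|P_n)$, with both sides equal to $+\infty$ precisely when $Q\not\ll P_n$.

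I first check the hypotheses of Theorem~\ref{thm: local behavior fD}. Both generators are twice-differentiable, with $r''(t)=(\log e)/t$ and $(r^\star)''(t)=(\log e)/t^2$, hence strictly convex on $(0,\infty)$. Both vanish at $t=1$, and $r'(1)=\log 1=0$ and $(r^\star)'(1)=-\log e+\log e=0$, so $t=1$ is the unique minimum of each. Consequently $r$ and $r^\star$ are strictly positive on $(0,1)\cup(1,\infty)$. The first conclusion \eqref{eq: limit fD, Pn-->Q} of Theorem~\ref{thm: local behavior fD} then immediately yields
\[
\lim_{n\to\infty} D(P_n\|Q)=\lim_{n\to\infty} D(Q\|P_n)=0.
\]

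It remains to compute the one-sided limits of $\kappa(t)=r(t)/r^\star(t)$ at $t=1$, since \eqref{eq: limit of ratio of f-divergences} sandwiches the target ratio between $\min\{\kappa(1^-),\kappa(1^+)\}$ and $\max\{\kappa(1^-),\kappa(1^+)\}$. A Taylor expansion around $t=1$ gives
\[
r(t)=\tfrac{1}{2}(t-1)^2\log e+O\bigl((t-1)^3\bigr),\qquad r^\star(t)=\tfrac{1}{2}(t-1)^2\log e+O\bigl((t-1)^3\bigr),
\]
so $\kappa(1^-)=\kappa(1^+)=1$. Plugging this into \eqref{eq: limit of ratio of f-divergences} squeezes $\lim_{n\to\infty}D(P_n\|Q)/D(Q\|P_n)$ between $1$ and $1$, yielding \eqref{eq: limit of RE/dual}.

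I do not foresee a serious obstacle; the only substantive moves are (i) recognizing that the reverse relative entropy $D(Q\|P_n)$ is itself a $D_g(P_n\|Q)$ with $g=r^\star$, so that both ratios and convergence statements fall within the scope of Theorem~\ref{thm: local behavior fD}, and (ii) observing that the quadratic coefficients of $r$ and $r^\star$ at $t=1$ coincide, which is the familiar fact that forward and reverse relative entropies are locally both equivalent to $\tfrac{1}{2}\chi^2(\cdot\|\cdot)\log e$ near the reference measure.
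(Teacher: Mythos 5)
Your proof is correct and follows the paper's intended route: the corollary is obtained by specializing Theorem~\ref{thm: local behavior fD} to $f=r$ and $g=r^\star$ (so that $D_g(P_n\|Q)=D(Q\|P_n)$), verifying the positivity hypotheses, and computing $\kappa(1^-)=\kappa(1^+)=1$ from the matching quadratic expansions of $r$ and $r^\star$ at $t=1$. Your $\kappa$ is exactly the function in \eqref{eq: kappa RE and dual}, and the squeeze via \eqref{eq: limit of ratio of f-divergences} is precisely how the full version of the paper derives this corollary.
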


\begin{corollary}\label{cor:ratios:chi}
Let $\{P_n \ll Q \}$ converge to $Q$ in the sense of  \eqref{eq: 1st condition}.
Then, $\chi^2(P_n \| Q)$ and $D(P_n \| Q)$ vanish as $n \to \infty$ with
\begin{align}\label{eq:cor:ratios:chi}
\lim_{n \to \infty} \frac{D(P_n \| Q)}{\chi^2(P_n \| Q)} &= \tfrac12 \log e.
\end{align}
\end{corollary}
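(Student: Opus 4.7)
The plan is to apply Theorem~\ref{thm: local behavior fD} with the choices $f = r$ (so that $D_f = D$, the relative entropy) and $g(t) = (t-1)^2$ (so that $D_g = \chi^2$). Once the hypotheses are verified, the vanishing of $D(P_n\|Q)$ and $\chi^2(P_n\|Q)$ is immediate from \eqref{eq: limit fD, Pn-->Q}, and the value of the limiting ratio is pinned down by \eqref{eq: limit of ratio of f-divergences} provided the two one-sided limits $\kappa(1^-)$ and $\kappa(1^+)$ coincide and equal $\tfrac12 \log e$.

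The first step is to check the structural hypotheses of Theorem~\ref{thm: local behavior fD}: both $f$ and $g$ must be convex on $(0,\infty)$ and strictly positive except at $t=1$. For $g(t)=(t-1)^2$ this is obvious. For $r$, convexity follows from $r''(t) = \tfrac{\log e}{t} > 0$, and from $r(1)=0$, $r'(1)=0$ together with strict convexity one concludes $r(t)>0$ for $t \neq 1$. Since $P_n \ll Q$ eventually, the setup of the theorem is fulfilled.

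The second step is to compute
\begin{align*}
\kappa(t) = \frac{r(t)}{(t-1)^2} = \frac{t \log t + (1-t) \log e}{(t-1)^2}
\end{align*}
at $t \to 1$ from both sides. Writing $t = 1 + u$ and using the expansion $(1+u)\ln(1+u) = u + \tfrac{u^2}{2} - \tfrac{u^3}{6} + O(u^4)$, one obtains
\begin{align*}
r(1+u) = \tfrac{\log e}{2}\, u^2 + O(u^3),
\end{align*}
so that $\kappa(1+u) \to \tfrac12 \log e$ as $u \to 0$ from either side. Hence $\kappa(1^-) = \kappa(1^+) = \tfrac12 \log e$, and the upper and lower bounds in \eqref{eq: limit of ratio of f-divergences} collapse to the same value, yielding \eqref{eq:cor:ratios:chi}.

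There is no real obstacle here; the only mildly delicate point is the Taylor computation, which must be handled carefully because of the arbitrary logarithm base (hence the $\log e$ factor in $r$). Everything else is a direct invocation of Theorem~\ref{thm: local behavior fD}.
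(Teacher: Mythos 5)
Your proposal is correct and is exactly the intended derivation: the corollary is stated in the paper as a direct application of Theorem~\ref{thm: local behavior fD} with $f=r$ and $g(t)=(t-1)^2$, and the Taylor expansion giving $\kappa(1^-)=\kappa(1^+)=\tfrac12\log e$ is the whole content. Your verification of the hypotheses and the expansion $r(1+u)=\tfrac{\log e}{2}u^2+O(u^3)$ are both accurate.
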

Note that \eqref{eq:cor:ratios:chi} is known in the  finite alphabet case \cite[Theorem~4.1]{CsiszarS_FnT}).

\subsection{Strengthened Jensen's inequality} \label{subsec: superjensen}

Bounding away from zero a certain density between two probability measures
enables  the following
strengthened version of Jensen's inequality, which generalizes a result
in \cite[Theorem~1]{Dragomir06}.

\begin{lemma}\label{lemma: superjensen}
Let $f \colon \Reals \to \Reals$ be a convex function, $\PU \ll \PZ$ be probability measures defined on a measurable space $(\set{A}, \mathscr{F})$, and
fix an arbitrary random transformation $P_{Z|X}\colon \set{A} \to \Reals$.
Denote\footnote{We follow the notation in \cite{Verdu_book} where $\PZ \to P_{Z|X} \to P_{Z_\mathtt{0}}$
means that the marginal probability measures of the joint distribution $\PZ  P_{Z|X}$ are
$\PZ$ and $ P_{Z_\mathtt{0}}$.}  $\PZ \to P_{Z|X} \to P_{Z_\mathtt{0}}$,
and $\PU \to P_{Z|X} \to P_{Z_\mathtt{1}}$.
Then,
\begin{align}
& \beta \, \bigl(\mathbb{E} \left[ f ( \mathbb{E} [ Z_\mathtt{0} | X_\mathtt{0} ] ) \right]
- f ( \mathbb{E} [ Z_\mathtt{0} ]  ) \bigr)  \nonumber \\
\label{eq: lemma-superjensen}
& \leq \mathbb{E} [ f ( \mathbb{E} [ Z_\mathtt{1} | X_\mathtt{1} ] ) ]
- f ( \mathbb{E} [ Z_\mathtt{1} ]  )
\end{align}
where $X_0 \sim \PZ$, $X_1 \sim \PU$, and
\begin{align}  \label{eq: beta}
\beta \triangleq \essinf \frac{\mathrm{d}\PU}{\mathrm{d}\PZ} \, (X_0).
\end{align}
\end{lemma}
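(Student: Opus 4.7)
The plan is to reduce the inequality to two applications of Jensen, after exhibiting $P_1$ as a convex combination of $P_0$ and an auxiliary probability measure. Writing $\varphi(x) \triangleq \mathbb{E}[Z\mid X=x]$, both sides are Jensen-gaps of the form $J(P) \triangleq \mathbb{E}_{X\sim P}[f(\varphi(X))] - f\bigl(\mathbb{E}_{X \sim P}[\varphi(X)]\bigr)$, evaluated at $P_0$ and $P_1$ respectively. So the target is the clean statement $\beta\, J(\PZ) \leq J(\PU)$, and the trivial cases $\beta = 0$ (just Jensen for $\PU$) or $\beta = 1$ (forces $\PU = \PZ$) can be dispatched up front.

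For $\beta \in (0,1)$, the key step is to define $R$ by
\begin{align}
\frac{\mathrm{d}R}{\mathrm{d}\PZ} = \frac{1}{1-\beta}\left(\frac{\mathrm{d}\PU}{\mathrm{d}\PZ} - \beta\right). \nonumber
\end{align}
Since $\frac{\mathrm{d}\PU}{\mathrm{d}\PZ} \geq \beta$ holds $\PZ$-a.s. by definition of the essential infimum, this density is non-negative, and it integrates to $1$ against $\PZ$, so $R$ is a bona fide probability measure on $(\set{A}, \mathscr{F})$ with $R \ll \PZ$. By construction,
\begin{align}
\PU = \beta \, \PZ + (1-\beta) \, R. \nonumber
\end{align}

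With this decomposition, both the expectation and the nonlinear average split linearly. Setting $\mu_0 = \int \varphi \, \mathrm{d}\PZ$ and $\mu_R = \int \varphi \, \mathrm{d}R$, one has $\mathbb{E}[Z_\mathtt{1}] = \beta \mu_0 + (1-\beta)\mu_R$ and $\mathbb{E}[f(\varphi(X_\mathtt{1}))] = \beta \int f(\varphi) \, \mathrm{d}\PZ + (1-\beta) \int f(\varphi)\, \mathrm{d}R$. Applying convexity of $f$ to bound $f(\beta \mu_0 + (1-\beta)\mu_R)$ from above by $\beta f(\mu_0) + (1-\beta) f(\mu_R)$ yields
\begin{align}
J(\PU) \geq \beta \, J(\PZ) + (1-\beta)\, J(R), \nonumber
\end{align}
and since $J(R) \geq 0$ by Jensen applied to $R$, the claim follows.

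The main obstacle, which is mild, is simply verifying that $R$ is a valid probability measure; this is where the hypothesis $\PU \ll \PZ$ and the essential infimum definition of $\beta$ are used. I would also briefly note why $\beta \leq 1$ (integrating $\frac{\mathrm{d}\PU}{\mathrm{d}\PZ}$ against $\PZ$ gives $1$, so its essential infimum cannot exceed $1$), so that the convex combination $\PU = \beta \PZ + (1-\beta) R$ is well-defined. No further machinery is needed.
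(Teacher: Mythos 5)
Your proof is correct and follows essentially the same route as the paper's: the paper also uses the fact that $\tfrac{\mathrm{d}\PU}{\mathrm{d}\PZ}-\beta$ is nonnegative to apply Jensen's inequality to the normalized measure $\tfrac{1}{1-\beta}\bigl(\mathrm{d}\PU-\beta\,\mathrm{d}\PZ\bigr)$, followed by convexity of $f$ at the point $\mathbb{E}[Z_\mathtt{1}]=\beta\,\mathbb{E}[Z_\mathtt{0}]+(1-\beta)\,\mathbb{E}_R[\varphi]$. Naming the auxiliary measure $R$ and writing $\PU=\beta\,\PZ+(1-\beta)\,R$ is just a cleaner packaging of the same two Jensen steps.
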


\begin{proof}
See \cite[Lemma~1]{SV15}.
\end{proof}

\begin{remark}
Letting $Z = X$, and choosing $\PZ$ so that $\beta = 0$ (e.g., $\PU$ is a restriction of $\PZ$
to an event of $\PZ$-probability less than~1), \eqref{eq: lemma-superjensen} becomes Jensen's
inequality $f(\mathbb{E}[X_\mathtt{1}]) \leq \mathbb{E}[f(X_\mathtt{1})]$.
\end{remark}

Lemma~\ref{lemma: superjensen} finds the following application to the derivation of $f$-divergence inequalities.

\begin{theorem} \label{thm: GI fD}
Let $f \colon (0, \infty) \to \Reals$ be a convex function with $f(1)=0$.
Fix $P \ll Q$ on the same space with $(\beta_1, \beta_2) \in [0,1)^2$ and let $X \sim P$. Then,
\begin{align}
\beta_2 \, D_f(P\|Q) &\leq \mathbb{E} \left[ f \left( \exp ( \imath_{P\|Q} (X) )
\right) \right]  - f\bigl( 1 + \chi^2(P\|Q) \bigr) \nonumber \\
\label{eq: GI fD}
&\leq \beta_1^{-1} \, D_f(P \| Q).
\end{align}
\end{theorem}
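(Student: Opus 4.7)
The plan is to rewrite each side of the claimed chain as a Jensen gap of $f$ and then to apply Lemma~\ref{lemma: superjensen} twice, once in each direction. Set $\phi(a) = \exp(\imath_{P\|Q}(a)) = \frac{\text{d}P}{\text{d}Q}(a)$. With $X \sim P$ and $Y \sim Q$, a change of measure gives $\mathbb{E}_P[f(\phi(X))] = \mathbb{E}_Q[\phi(Y) f(\phi(Y))]$ and $\mathbb{E}_P[\phi(X)] = \mathbb{E}_Q[\phi(Y)^2] = 1 + \chi^2(P\|Q)$; together with $\mathbb{E}_Q[\phi(Y)] = 1$ and $f(1)=0$, these rewrite $D_f(P\|Q)$ as $\mathbb{E}_Q[f(\phi(Y))] - f(\mathbb{E}_Q[\phi(Y)])$. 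Hence the assertion becomes the two-sided comparison
\begin{align*}
\beta_2 \bigl(\mathbb{E}_Q[f\circ\phi] - f(\mathbb{E}_Q[\phi])\bigr)
&\leq \mathbb{E}_P[f\circ\phi] - f(\mathbb{E}_P[\phi]) \\
&\leq \beta_1^{-1} \bigl(\mathbb{E}_Q[f\circ\phi] - f(\mathbb{E}_Q[\phi])\bigr)
\end{align*}
between the Jensen gaps of $f$ applied to $\phi$ under the two measures.

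For the lower bound I would invoke Lemma~\ref{lemma: superjensen} with $\PZ = Q$, $\PU = P$, and the deterministic random transformation $X \mapsto Z = \phi(X)$, so that $\mathbb{E}[Z_i \mid X_i] = \phi(X_i)$. The hypothesis $\PU \ll \PZ$ is the standing $P \ll Q$; the prefactor in \eqref{eq: beta} becomes $\essinf_{X_0 \sim Q}\phi(X_0) = \beta_2$ via \eqref{eq: beta2-alt}, which yields the lower bound (and when $\beta_2 = 0$ the lower bound is trivial by Jensen). For the upper bound I would swap: take $\PZ = P$, $\PU = Q$, same deterministic $Z = \phi(X)$. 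The prefactor is now $\essinf_{X_0 \sim P}\frac{\text{d}Q}{\text{d}P}(X_0) = \bigl(\esssup_{X_0 \sim P}\phi(X_0)\bigr)^{-1} = \beta_1$, where the identification $\esssup_P \phi = \esssup_Q \phi$ follows from $P \ll Q$ together with the elementary bound $P(\phi > M) \geq M\,Q(\phi > M)$. The lemma then delivers $\beta_1 \cdot(\text{middle}) \leq D_f(P\|Q)$, which is the upper bound.

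The main obstacle is that this second application of Lemma~\ref{lemma: superjensen} requires $\PU \ll \PZ$, i.e., $Q \ll P$, which is automatic when $\beta_2 > 0$ but can fail in the boundary regime $\beta_1 > 0$, $\beta_2 = 0$. To cover that regime I would first normalize $f$ via Remark~\ref{remark: equivalence-fD}: replace $f$ by $f_b(t) = f(t) + b(t-1)$ with $b = -f'(1)$ (or a subgradient). This preserves both $D_f$ and the middle Jensen-gap expression (the added linear term contributes equally to $\mathbb{E}_P[f\circ\phi]$ and to $f(\mathbb{E}_P[\phi])$) while forcing $f \geq 0$, because $f(1) = f'(1) = 0$ makes $t = 1$ a global minimum of the convex function. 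With $f \geq 0$, the upper bound follows directly:
\begin{align*}
\mathbb{E}_P[f\circ\phi] - f(\mathbb{E}_P[\phi])
&\leq \mathbb{E}_P[f(\phi(X))] = \mathbb{E}_Q[\phi\,f(\phi)] \\
&\leq \beta_1^{-1}\,\mathbb{E}_Q[f(\phi)] = \beta_1^{-1}\,D_f(P\|Q),
\end{align*}
where the final inequality uses $\phi \leq \beta_1^{-1}$ $Q$-almost surely. This bypasses Lemma~\ref{lemma: superjensen} in the degenerate regime and completes the proof.
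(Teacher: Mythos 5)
Your proposal is correct and follows essentially the route the paper intends: rewriting both $D_f(P\|Q)$ and the middle term as Jensen gaps of $f$ applied to $\frac{\mathrm{d}P}{\mathrm{d}Q}$ under $Q$ and $P$ respectively, and then applying Lemma~\ref{lemma: superjensen} with the deterministic kernel $Z=\exp(\imath_{P\|Q}(X))$ once in each direction, with the prefactors identified as $\beta_2$ and $\beta_1$. Your additional observation that the swapped application of the lemma literally requires $Q \ll P$, together with the direct change-of-measure argument (after normalizing $f$ via Remark~\ref{remark: equivalence-fD} so that $f\geq 0$) that covers the regime $\beta_1>0$, $Q \not\ll P$, is a careful and valid patch of a corner case.
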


Specializing Theorem~\ref{thm: GI fD} to the convex function on $(0, \infty)$ where
$f(t) = -\log t$ sharpens the inequality
\begin{align} \label{grout425}
D ( P \| Q) &\leq \log \left( 1 + \chi^2 ( P \| Q ) \right) \\
\label{eq: CsiszarT06}
&\leq \chi^2(P \| Q) \log e.
\end{align}
under the assumption of bounded relative information.

\begin{theorem} \label{thm:d-chi}
Fix $P \ll \gg Q$ such that $(\beta_1, \beta_2) \in (0,1)^2$. Then,
\begin{align}
\label{lbchi}
\beta_2 \, D(Q\|P) &\leq \log \bigl(1 + \chi^2(P\|Q) \bigr) - D(P\|Q) \\
\label{ubchi}
& \leq \beta_1^{-1} \, D(Q\|P).
\end{align}
\end{theorem}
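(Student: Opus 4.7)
The plan is to apply Theorem~\ref{thm: GI fD} with the specific choice $f(t)=-\log t$, and identify each of the three terms in \eqref{eq: GI fD} with the quantities appearing in \eqref{lbchi}--\eqref{ubchi}. Since $-\log$ is convex on $(0,\infty)$ and vanishes at $t=1$, the hypotheses of Theorem~\ref{thm: GI fD} are satisfied. Moreover, because we assume $P \ll \gg Q$ with $(\beta_1,\beta_2) \in (0,1)^2$, both Radon--Nikodym derivatives $\tfrac{\mathrm{d}P}{\mathrm{d}Q}$ and $\tfrac{\mathrm{d}Q}{\mathrm{d}P}$ are well defined and strictly positive almost surely.

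First, I would compute $D_f(P\|Q)$ for $f(t)=-\log t$. By Definition~\ref{def:fD}, with $Y \sim Q$,
\begin{align}
D_f(P\|Q) = \mathbb{E}\!\left[-\log \tfrac{\mathrm{d}P}{\mathrm{d}Q}(Y)\right] = \mathbb{E}\!\left[\log \tfrac{\mathrm{d}Q}{\mathrm{d}P}(Y)\right] = D(Q\|P),
\end{align}
using that $P \ll \gg Q$. Next, the middle term of \eqref{eq: GI fD}: for $X \sim P$, we have $\exp(\imath_{P\|Q}(X)) = \tfrac{\mathrm{d}P}{\mathrm{d}Q}(X)$, so
\begin{align}
\mathbb{E}\!\left[f\bigl(\exp(\imath_{P\|Q}(X))\bigr)\right] = -\mathbb{E}\!\left[\imath_{P\|Q}(X)\right] = -D(P\|Q).
\end{align}
Finally, $f\bigl(1 + \chi^2(P\|Q)\bigr) = -\log\bigl(1+\chi^2(P\|Q)\bigr)$.

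Substituting these three evaluations into \eqref{eq: GI fD} yields
\begin{align}
\beta_2\, D(Q\|P) \leq -D(P\|Q) + \log\!\bigl(1+\chi^2(P\|Q)\bigr) \leq \beta_1^{-1} D(Q\|P),
\end{align}
which is precisely the claimed chain \eqref{lbchi}--\eqref{ubchi}. No new analytic difficulty is encountered; the only nontrivial step is the identification $D_{-\log}(P\|Q) = D(Q\|P)$, which relies on $P \ll \gg Q$ so that the change of measure in the integral is legitimate. Everything else is a direct bookkeeping substitution into Theorem~\ref{thm: GI fD}.
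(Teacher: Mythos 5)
Your proposal is correct and matches the paper's own route exactly: the paper obtains Theorem~\ref{thm:d-chi} precisely by specializing Theorem~\ref{thm: GI fD} to $f(t) = -\log t$, and your identifications $D_{-\log}(P\|Q) = D(Q\|P)$, $\mathbb{E}[f(\exp(\imath_{P\|Q}(X)))] = -D(P\|Q)$, and $f(1+\chi^2(P\|Q)) = -\log(1+\chi^2(P\|Q))$ are all accurate. The only remark worth adding is that the hypothesis $P \ll \gg Q$ with $(\beta_1,\beta_2) \in (0,1)^2$ is what guarantees both relative entropies are well defined and $\beta_1^{-1}$ is finite, which you already note.
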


\section{Reverse Pinsker Inequalities}
\label{sec:reverseP}
It is not possible to lower bound $|P - Q|$ solely in terms of $D(P\|Q)$ since for an
arbitrary small $\epsilon >0$ and an arbitrary large $\lambda >0$, we can construct
examples with $|P - Q| < \epsilon$ and $\lambda < D(P\|Q) < \infty$.
As in Section \ref{sec:bounded},  the following result involves the bounds on the relative information.
\begin{theorem} \label{thm: improved SV-ITA14}
If $\beta_1 \in (0, 1)$ and $\beta_2 \in [0, 1)$, then,
\begin{align} \label{eq: improved SV-ITA14}
D(P\|Q) \leq \tfrac12 \left( \varphi(\beta_1^{-1}) - \varphi(\beta_2) \right) \, |P-Q|
\end{align}
where $\varphi \colon [0, \infty) \to [0, \infty)$ is given by
\begin{align}
\varphi (t) =
\left\{
\begin{array}{ll}
0 & t=0\\
\frac{t \log t}{t-1} & t \in (0,1) \cup (1,\infty) \\
\log e & t=1.
\end{array}
\right.
\end{align}
\end{theorem}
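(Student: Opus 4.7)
The strategy is to apply Theorem~\ref{thm: fD1} with $g(t)=|t-1|$, so that $D_g(P\|Q)=|P-Q|$, and with $f$ taken to be an offset of the relative-entropy generator, exploiting the freedom of Remark~\ref{remark: equivalence-fD} as advocated in Remark~\ref{remark: play}. Specifically, introduce
\begin{align*}
r_c(t) \,=\, r(t) + c\,(t-1) \,=\, t \log t + (1-t)\log e + c\,(t-1),
\end{align*}
which is convex, satisfies $r_c(1)=0$, and gives $D_{r_c}(P\|Q)=D(P\|Q)$ for every $c \in \Reals$. The hypotheses of Theorem~\ref{thm: fD1} hold for the pair $(f,g)=(r_c,\,|t-1|)$, yielding
\begin{align*}
D(P\|Q) \,\leq\, \kappa_c^\star \, |P-Q|,\qquad \kappa_c^\star = \sup_{t \in (\beta_2,1)\cup(1,\beta_1^{-1})} \frac{r_c(t)}{|t-1|}.
\end{align*}

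Next, I compute $\kappa_c^\star$ in closed form. Splitting according to the sign of $t-1$,
\begin{align*}
\frac{r_c(t)}{|t-1|}\,=\,
\begin{cases}
\varphi(t)-\log e + c, & t > 1,\\
\log e - \varphi(t) - c, & 0<t<1,
\end{cases}
\end{align*}
where $\varphi$ is precisely the slope of the secant of the convex map $t\mapsto t\log t$ through the point $(1,0)$ and is therefore monotonically increasing on $(0,\infty)$. Consequently $r_c(t)/|t-1|$ is monotonically increasing on $(1,\beta_1^{-1})$ and monotonically decreasing on $(\beta_2,1)$, so
\begin{align*}
\kappa_c^\star\,=\,\max\bigl\{\varphi(\beta_1^{-1})-\log e + c,\; \log e-\varphi(\beta_2) - c\bigr\}.
\end{align*}

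Finally, I optimize over the free parameter $c \in \Reals$. The right-hand side is a max of two affine functions of $c$ with slopes $+1$ and $-1$, so it is minimized when the two arguments coincide, forcing $c^\star = \log e - \tfrac12 \bigl(\varphi(\beta_1^{-1}) + \varphi(\beta_2)\bigr)$ and giving $\kappa_{c^\star}^\star = \tfrac12\bigl(\varphi(\beta_1^{-1})-\varphi(\beta_2)\bigr)$, which is exactly \eqref{eq: improved SV-ITA14}. The main conceptual point is that a direct application of Theorem~\ref{thm: fD1} to $f=r$ would yield only the weaker bound with $\max\bigl\{\log e-\varphi(\beta_2),\,\varphi(\beta_1^{-1})-\log e\bigr\}$ in place of the $\tfrac12$-average; the shift by $c(t-1)$ translates $\kappa$ on the two sides of $1$ in opposite directions without affecting $D(P\|Q)$, and the optimal $c$ balances the two one-sided suprema so that the max collapses to the average.
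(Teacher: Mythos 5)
Your proof is correct, and it reaches the exact constant. It is, however, a different route from the one taken in \cite[Theorem~23]{SV15}. There, the argument is direct: since $t\log t=\varphi(t)(t-1)$ for $t\neq 1$, one writes $D(P\|Q)=\mathbb{E}\left[\varphi(Z)(Z-1)\right]$ with $Z=\exp\bigl(\imath_{P\|Q}(Y)\bigr)$, $Y\sim Q$, bounds $\varphi(Z)\leq\varphi(\beta_1^{-1})$ on $\{Z>1\}$ and $\varphi(Z)\geq\varphi(\beta_2)$ on $\{Z<1\}$ using the monotonicity of the secant slope $\varphi$, and concludes via $\mathbb{E}\bigl[(Z-1)^+\bigr]=\mathbb{E}\bigl[(1-Z)^+\bigr]=\tfrac12|P-Q|$. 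Your version instead funnels everything through Theorem~\ref{thm: fD1} with $g(t)=|t-1|$ and recovers the factor $\tfrac12$ by optimizing the affine offset $c(t-1)$, exactly in the spirit of Remark~\ref{remark: play}; the balancing of the two one-sided suprema is the functional-domination way of encoding the identity $\mathbb{E}[Z-1]=0$ that the paper uses explicitly. What your approach buys is a demonstration that the general machinery of Section~\ref{sec:bounded}, with no ad hoc manipulation of the integrand, already yields the sharp constant, together with a clear explanation of why the unshifted choice $f=r$ loses a factor (a max replaces an average); what the paper's direct computation buys is brevity and a transparent view of when the two one-sided bounds are tight. One small point worth making explicit in your write-up: when $\beta_2=0$ the supremum over $(\,0,1)$ of $\log e-\varphi(t)-c$ equals $\log e-c$, which matches $\log e - \varphi(\beta_2) - c$ precisely because $\lim_{t\downarrow 0}\varphi(t)=0=\varphi(0)$, so your closed-form expression for $\kappa_c^\star$ remains valid in that edge case.
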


\begin{proof}
See \cite[Theorem~23]{SV15}.
\end{proof}

\begin{remark}
Note that for Theorem~\ref{thm: improved SV-ITA14} to give a nontrivial result,
it is necessary that the relative information be upper bounded, namely $\beta_1 >0$.
However, we still get a nontrivial bound if $\beta_2 =0$.
\end{remark}

In the following, we assume that $P$ and $Q$ are probability measures defined
on a common finite set $\set{A}$, and $Q$ is strictly positive on $\set{A}$
with $|\set{A}| \geq 2$.
\begin{theorem} \label{thm: UB-RE-FS}
Let $Q_{min} = \min_{a \in \set{A}} Q(a)$, then
\begin{align}
\label{eq: UB-RE-FS2}
D(P \| Q) \leq \log \left(1 + \frac{|P-Q|^2}{2 Q_{\min}} \right).
\end{align}
Furthermore, if $Q \ll P$ and $\beta_2$ is defined as in \eqref{eq: beta2},
then the following tightened bound holds:
\begin{align*}
D(P \| Q) \leq \log \left(1 + \frac{|P-Q|^2}{2 Q_{\min}} \right)
- \tfrac12 \beta_2 |P-Q|^2 \log e.
\end{align*}
\end{theorem}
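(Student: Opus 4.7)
The plan is to reduce the bound on $D(P\|Q)$ in terms of $|P-Q|$ and $Q_{\min}$ to a bound on $\chi^2(P\|Q)$, and then invoke \eqref{grout425}, namely $D(P\|Q) \leq \log(1 + \chi^2(P\|Q))$, for the first part; and its sharpening via Theorem~\ref{thm:d-chi} for the second part. The workhorse to get from $\chi^2$ back to $|P-Q|$ is the inequality
\begin{align*}
\chi^2(P\|Q) = \sum_{a \in \set{A}} \frac{(P(a)-Q(a))^2}{Q(a)} \leq \frac{1}{Q_{\min}} \sum_{a \in \set{A}} (P(a)-Q(a))^2 \leq \frac{|P-Q|^2}{2 Q_{\min}},
\end{align*}
where the last step uses the standard observation that, splitting $\set{A}$ according to the sign of $P(a)-Q(a)$, each half contributes at most $\bigl(\tfrac12 |P-Q|\bigr)^2$ to $\sum_a (P(a)-Q(a))^2$ (since $\sum x_i^2 \leq (\sum |x_i|)^2$ and the two halves have equal $\ell^1$ mass $\tfrac12 |P-Q|$, by $\sum_a (P(a)-Q(a))=0$). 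Combining with \eqref{grout425} and monotonicity of $\log$ gives \eqref{eq: UB-RE-FS2}.

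For the tightened bound, the additional assumption $Q \ll P$ puts us in the setting of Theorem~\ref{thm:d-chi}, which strengthens \eqref{grout425} into
\begin{align*}
D(P\|Q) \leq \log\bigl(1 + \chi^2(P\|Q)\bigr) - \beta_2 \, D(Q\|P).
\end{align*}
Applying Pinsker's inequality $D(Q\|P) \geq \tfrac12 |P-Q|^2 \log e$ (noting that $|P-Q|=|Q-P|$) and the same $\chi^2 \leq \tfrac{|P-Q|^2}{2 Q_{\min}}$ bound as before yields
\begin{align*}
D(P\|Q) \leq \log\left(1 + \frac{|P-Q|^2}{2 Q_{\min}}\right) - \tfrac12 \beta_2 \, |P-Q|^2 \log e,
\end{align*}
which is the claimed refinement.

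The only non-routine step is the elementary combinatorial bound $\sum_a (P(a)-Q(a))^2 \leq \tfrac12 |P-Q|^2$; everything else is a direct chaining of previously stated results (\eqref{grout425}, Theorem~\ref{thm:d-chi}, and Pinsker's inequality). I do not anticipate a serious obstacle; the main care is in keeping the direction of the inequalities straight (in particular, noting that while $\chi^2 \geq |P-Q|^2$ holds in general by Cauchy--Schwarz, the reverse direction with the $\tfrac{1}{2Q_{\min}}$ factor is what is needed here, and depends crucially on the finiteness of $\set{A}$ and on $Q_{\min}>0$).
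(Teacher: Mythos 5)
Your proof is correct and follows essentially the same route as the paper's: the first bound chains $D(P\|Q)\le\log\bigl(1+\chi^2(P\|Q)\bigr)$ from \eqref{grout425} with the elementary finite-alphabet estimate $\chi^2(P\|Q)\le \tfrac{|P-Q|^2}{2Q_{\min}}$, and the refinement combines \eqref{lbchi} of Theorem~\ref{thm:d-chi} with Pinsker's inequality applied to $D(Q\|P)$. No gaps.
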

\begin{proof}
See \cite[Theorem~25]{SV15}.
\end{proof}

\begin{remark}
The result in Theorem~\ref{thm: UB-RE-FS} improves the inequality by Csisz\'ar and Talata
\cite[p.~1012]{CsiszarT_IT06}:
\begin{align}  \label{eq: CsTa}
D(P \| Q) \leq \left(\frac{\log e}{Q_{\min}} \right) \cdot |P-Q|^2.
\end{align}
\end{remark}

For further reverse Pinsker Inequalities and some of their implications,
see \cite[Section~6]{SV15}.

\end{document}